\documentclass[12pt, draftclsnofoot, onecolumn,twoside]{IEEEtran}

\usepackage{url}
\usepackage{cite}
\usepackage{amsmath,amssymb,amsfonts}
\usepackage{amsthm}
\usepackage{algorithmic}
\usepackage{graphicx}
\usepackage{textcomp}
\usepackage{xcolor}
\usepackage{multicol}
\usepackage{booktabs}
\usepackage{epstopdf}
\usepackage{soul}
\usepackage{subfigure} 
\usepackage{lineno}
\newtheorem{theorem}{Theorem}
\newtheorem{lemma}{Lemma}

\newcommand{\snr}{\mathsf{snr}}
\newcommand{\sgn}{\text{sgn}}

\def\ve#1{{\mathchoice{\mbox{\boldmath$\displaystyle #1$}}%
		{\mbox{\boldmath$\textstyle #1$}}%
		{\mbox{\boldmath$\scriptstyle #1$}}%
		{\mbox{\boldmath$\scriptscriptstyle #1$}}}}

\def\e{{\mathrm{e}}}

\def\ptav{p_{\mathrm{T,av}}}

\renewcommand{\text}{\mathrm}

\begin{document}
	\title{Outage Performance Analysis of Widely Linear Receivers in Uplink Multi-user MIMO Systems}
	
	\author{\IEEEauthorblockN{Ronghua Gui, Naveen Mysore Balasubramanya, Lutz Lampe}}
	\markboth{Submitted paper}%
	{Submitted paper}
	
	\maketitle
	
	\begin{abstract}
		This paper considers the application of widely linear (WL) receivers in an uplink multi-user system using real-valued modulation schemes, where the cellular base station (BS) with multiple antennas provides connectivity for randomly deployed single-antenna users.  The targeted use case is massive machine type communication (mMTC) with grant-free access in the uplink, where the network is required to host a large number of low data rate devices transmitting in an uncoordinated fashion. Four types of WL receivers are investigated, namely the WL zero-forcing (ZF) and the WL minimum mean-squared error (MMSE) receivers, along with their enhanced versions employing successive interference cancellation (SIC) with channel-dependent ordering, i.e., the WL-ZF-SIC and WL-MMSE-SIC receivers. The outage performances of these receivers are analytically characterized in the high signal-to-noise ratio (SNR) regime and  compared to those of conventional linear (CL) receivers using complex-valued modulation schemes. For the non-SIC receivers, we show that, when compared to the CL counterparts, the WL receivers yield a higher diversity gain when decoding the same number of users and have the same diversity gain but a decreased coding gain when the number of users is nearly doubled. The outage performance analysis of WL-SIC receivers is facilitated by the marginal distribution of ordered eigenvalues of a real-valued Wishart matrix. It is shown that the SIC operation with channel-dependent ordering brings no additional diversity gain to the WL receivers but instead increases the coding gain. Moreover, the coding gain of WL-SIC receivers grows as the number of users increases and even exceeds that of CL-SIC receivers under suitable conditions. For the mMTC scenario with grant-free transmission, it is demonstrated that the WL receivers outperform their CL counterparts in terms of offering a lower outage (and packet drop) probability and a higher system throughput for a given packet drop probability.
	\end{abstract}

	\begin{IEEEkeywords}
		Multi-user communication, grant-free access, massive machine type communication (mMTC), multi-user detection, widely linear (WL) receiver, multi-input multi-output (MIMO), zero-forcing (ZF), minimum mean-squared error (MMSE), successive interference cancellation (SIC), outage probability, real Wishart matrix, ordered eigenvalues.
	\end{IEEEkeywords}
	\IEEEpeerreviewmaketitle

	\section{Introduction}
	Uplink multi-user communication to a base station (BS) with multiple receiver antennas forms a multi-input multi-output (MIMO) system, whose performance is dominated by the coordination and processing of multi-user signals. A key innovation has been the conception of multi-user detection techniques, which permit the simultaneous reliable transmission of multiple user streams \cite{Verdu:1998:MD:521411,TseViswanathFundamentals05}. Among various multi-user detection techniques, linear multi-user and successive interference cancellation (SIC) detection are of particular interest due to their favorable complexity-performance trade-offs. 
	
	The performance of conventional linear (CL) multi-user detection can, in certain scenarios, be enhanced by widely linear (WL) processing \cite{PicinbonoChevalierWidely95}. In the context of data communication, while CL processing refers to the use of complex-valued signal constellations for signal transmission and reception, WL processing refers to the use of one-dimensional constellations transmitted over a complex-valued channel. This specifically includes real-valued transmission with pulse amplitude modulation (PAM), but also complex-valued modulations such as offset quadrature amplitude modulation (offset QAM), minimum shift keying (MSK) and Gaussian minimum shift keying (GMSK), that can be interpreted as PAM transmission after a derotation operation at the receiver \cite{GerstackerSchoberReceivers03,ChevalierPiponNew06}. 
 As an important instance of this, WL processing has been adopted at the receiver to perform single antenna interference cancellation (SAIC) in cellular networks involving one-dimensional modulated signals, such as, binary phase shift keying (BPSK), MSK, and GMSK \cite{GerstackerSchoberReceivers03,ChevalierPiponNew06}. In this scenario, it has been demonstrated that the optimal WL receiver with an array of $N$ antennas can process up to $2N-1$ interference signals, while the conventional receiver requires $2N$ antennas to do the same \cite{ChevalierPiponNew06}.
	
	More generally, it has been shown that WL processing can improve the receiver performance in MIMO systems if the received signal is second-order noncircular or improper \cite{AdaliSchreierComplex11}, such as in the case of PAM or GMSK transmission. For example, the receiver performance improvement obtained using WL detection for single user MIMO systems is demonstrated in \cite{BuzziLopsWidely,MatteraPauraWidely,WitzkeLinear05,AghaeiPlataniotisWidely10}. Particularly, in \cite{WitzkeLinear05,AghaeiPlataniotisWidely10}, the transmitted codes are constructed using a linear combination of input symbols and their complex conjugate, which aids in WL processing. The applications of WL processing in various multi-user MIMO systems with real-valued constellations were investigated in \cite{KuchiPrabhuPerformance09,KuchiPrabhuInterference09,ChevalierDupuyWidely11,ZareiGerstackerLow15}. 

In the present age, numerous Internet of things (IoT) applications, such as, smart metering systems and smart buildings, are envisioned under the massive machine type communication (mMTC) framework, which hosts a large number of users demanding low data rates while being delay tolerant. In such scenarios, grant-free transmission is the preferred multiple access strategy \cite{AzariPopovskiGrant17, BockelmannPratasTowards18, cirik2019toward}.  Our previous works \cite{GuiBalasubramanyaUplink19,GuiBalasubramanyaConnectivity19} demonstrated that the user density in mMTC systems with grant-free access can be further enhanced by adopting real-valued modulation along with WL detection at the base station.  We presented an analytical outage performance characterization of the WL zero-forcing (WL-ZF) and WL minimum mean-squared error (WL-MMSE) receivers in these works. Specifically, in \cite{GuiBalasubramanyaConnectivity19}, we considered the packet drop probability for a fixed target rate for both CL and WL users and grant-free access in an mMTC scenario as a performance indicator. Based on this metric, we computed the supported user density as the maximum number of supported users per cell such that the packet drop probability does not exceed a given threshold (please see \cite[eq. (39)]{GuiBalasubramanyaConnectivity19}, and the system throughput as the number of correctly decoded packets per second per Hz (please see \cite[eq. (40)]{GuiBalasubramanyaConnectivity19}). Further, we demonstrated that the packet drop probability was dominated by the collision probability of the users. We showed that, for a fixed target data rate and transmission time interval (TTI) of the mMTC user application, (i) WL receivers are capable of resolving more colliding packets, thereby resulting in a lower packet drop probability than the CL receivers for a given number of users, or alternatively, (ii) WL processing supports a higher user density and offers a larger system throughput than CL processing for a given packet drop probability.

In this work, we further analyze the outage performance of WL receivers for uplink multi-user MIMO systems. While the analysis is agnostic with regards to specific applications, the targeted use-case is mMTC using grant-free access mechanisms as considered in our previous works \cite{GuiBalasubramanyaUplink19,GuiBalasubramanyaConnectivity19}. We derive the analytical expressions for diversity and coding gains of the WL-ZF and WL-MMSE receivers, based on their outage probability in the high signal-to-noise ratio (SNR) regime. These expressions enable us to analytically characterize the performance of these WL  receivers in high and low data rate scenarios, unlike the characterization based on heuristic arguments presented in our previous works \cite{GuiBalasubramanyaUplink19,GuiBalasubramanyaConnectivity19}. We also introduce the SIC counterparts of these WL receivers, i.e., WL-ZF-SIC and WL-MMSE-SIC receivers, analyze their outage performance and demonstrate the system throughput obtained by the use of the SIC-based WL receivers in an mMTC scenario with grant-free access. Moreover, the outage performance analysis of WL-SIC receivers is facilitated by the marginal distribution of the ordered eigenvalues of real-valued Wishart matrix, which is constructed from the complex channel matrix. However, for real-valued Wishart matrices, there are no closed-form expressions for the marginal distribution of their ordered eigenvalues, that can directly be employed or derived from the existing results for complex-valued Wishart matrices \cite{KrishnaiahChangOn71, EdelmanEigenvalues89,JohnstoneOn01,RatnarajahVaillancourtEigenvalues04,FeldheimSodinA08,
		ZhongMcKayDistribution11,ChianiDistribution14}. Therefore, we derive the polynomial approximation of the marginal cumulative distribution function (CDF) of those ordered eigenvalues around zero, which is shown to be accurate enough for the outage performance analysis of WL-SIC receivers in the high SNR regime. It is worth pointing out that our results of the polynomial approximation for the ordered-eigenvalue distribution of real Wishart matrix are also novel.
	

The remainder of this paper is organized as follows. Section~\ref{s:model} describes the system model for the uplink multi-user MIMO system with real-valued modulation schemes and presents the WL versions of conventional ZF and MMSE receivers. Section~\ref{s:performance} analyzes the outage performance of WL-ZF and WL-MMSE receivers. Section~\ref{s:performancesic} covers the outage performance analysis of the corresponding WL-SIC receivers. Numerical results are presented in Section~\ref{s:results}, followed by conclusions in Section~\ref{s:conclusions}.

	\section{System Model and Widely Linear Receivers}	
\label{s:model}

	\subsection{System Model}
	We consider a cellular uplink transmission scenario, in which the BS with $M$ receive antennas 
 supports a population of single-antenna users. Assuming that $N$ users are active in the current transmission interval, the complex baseband model for the multi-user uplink channel is given by 
	\begin{equation}\label{MimoMod}
	\ve{\bar{y}}=\sqrt{\ptav}\ve{\bar{H}}\ve{\varPsi}^{\frac{1}{2}}\ve{x}+\ve{\bar{n}},
	\end{equation}
	where $\ve{\bar{y}}\in \mathbb{C}^{M\times 1}$ is the vector of complex received samples, $\ptav$ denotes the average transmit power, and $\ve{\bar{H}}\in \mathbb{C}^{M\times N}$ is the matrix of $M\times N$ small-scale fading coefficients. For concreteness, we assume independent and circularly symmetric standard complex Gaussian distributed entries of $\ve{\bar{H}}$, i.e., $\bar{h}_{i,j}\sim \mathcal{C}\mathcal{N}\left( 0,1 \right)$ for $1\le i \le M$, $1 \le j \le N$. The diagonal matrix $\boldsymbol{\varPsi }\triangleq \text{diag}([\xi _1, \xi _2, \ldots , \xi _N])$ accounts for the received power variation due to power control and large-scale fading. Its $i^{\mathrm{th}}$ diagonal entry can be expressed as
	\begin{equation}
	\xi_i=\frac{p_{\text{T,}i}\beta _i\psi _i}{p_{\text{T,av}}},
	\end{equation}
	where $p_{\text{T,}i}$, $\beta _i$ and $\psi _i$ denote the instantaneous transmit power, the pathloss and the shadowing fading factor for user $i$, respectively. We assume that $\xi_i$, $1\le i\le N$, are independently and identically distributed (i.i.d.) random variables, whose distribution is determined by the user deployment, large-scale fading model and power control strategies \cite{HaenggiStochastic12}. Without power control, we have $p_{\text{T,}i}= p_{\text{T,av}}$, $1\le i\le N$, and the power variation term $\xi_i=\beta_i\psi_i$ is the large-scale fading. With perfect power compensation (PPC), $p_{\text{T,}i}=\ptav\frac{\left( \beta _i\psi _i \right) ^{-1}}{\mathbb{E}_{\beta_i \psi_i}\left\{ ( \beta _i\psi _i) ^{-1} \right\}}$, where $\mathbb{E}_{x}\{ \cdot \}$ denotes the statistical expectation with respect to $x$, and the large-scale fading is fully compensated so that $\xi_i=\frac{1}{\mathbb{E}_{\beta_i \psi_i}\left\{ ( \beta _i\psi _i) ^{-1} \right\}}\triangleq \xi_{\text{PPC}}$, $1\le i\le N$. Other cases lie in between those two extremes. The vector $\ve{x}$ contains the data symbols simultaneously transmitted by the $N$ users, which are generated from one-dimensional constellations, i.e., $\ve{x}\in \mathbb{R}^N$. As the $N$ data streams come from multiple independent users, we assume that $\mathbb{E}_{\ve{x}}\left\{ \ve{xx}^T \right\} =\ve{I}_N$, where $(\cdot)^T$ and $\ve{I}_N$ denote the transpose operator and the identity matrix of size $N \times N$, respectively. The complex noise $\ve{\bar{n}}\in \mathbb{C}^N$ is modelled to be white Gaussian with $\ve{\bar{n}}\sim \mathcal{C}\mathcal{N}\left( 0,\sigma _{\mathrm{c}}^{2} \right)$, where $\sigma _{\mathrm{c}}^{2}$ is the variance. For further use, we define the transmit SNR as
	\begin{equation}\label{e:snr}
	\snr \triangleq \frac{p_{\mathrm{T,av}}}{\sigma _{\mathrm{c}}^{2}}.
	\end{equation}
	In addition to typical real-valued modulations, i.e., PAM, the MIMO model considered in (\ref{MimoMod}) is also applicable to complex-valued modulation schemes, such as Gaussian minimum-shift keying (GMSK), whose complex amplitude can be considered as a filtered version of a real-valued modulation after a derotation operation \cite{ChevalierDelmasProperties14}.
	
	Considering the real-valued transmission, we apply the WL transform\footnote{The WL transform herein augments the $M$-receiver antenna into an equivalent $2M$-receiver antenna MIMO system, which allows us to apply the methodology for the analysis of conventional MIMO systems from e.g.\ \cite{JiangVaranasiPerformance11, TseViswanathFundamentals05}. It is different from the transform that stacks the complex-valued signal and its complex conjugate, which has widely been used in the literature \cite{PicinbonoChevalierWidely95,AdaliSchreierComplex11,BuzziLopsWidely,SchoberGerstackerData04,ChevalierPiponNew06,KorpiAnttilaWidely} for the sake of convenience in differentiating a real-valued function with respect to a complex argument.} $\mathcal{T}:\ \ve{\bar{x}}\in \mathbb{C}^{M}\,\rightarrow \,\ve{x}\in \mathbb{R}^{2M}$ with $\ve{x}=\left[ \text{Re}\left( \ve{\bar{x}}^T \right) \ \text{Im}\left( \ve{\bar{x}}^T \right) \right] ^T$, where $\text{Re}\left( \cdot \right)$ and $\text{Im}\left( \cdot \right)$ denote the real and imaginary parts, respectively, of the complex-valued vector $\ve{\bar{y}}$ defined above. Then we can rewrite the signal model in (\ref{MimoMod}) as
	\begin{equation}\label{WLM}
	\ve{y}=\sqrt{p_{\mathrm{T,av}}}\ve{H\varPsi }^{\frac{1}{2}}\ve{x}+\ve{n}
	\end{equation}
	where $\ve{H}=\left[ \text{Re}\left( \ve{\bar{H}}^T \right), \text{Im}\left( \ve{\bar{H}}^T \right) \right] ^T$ is the $2M \times N$ real-valued channel matrix with i.i.d. entries $h_{i,j}\sim \mathcal{N}\left( 0,0.5 \right)$ for $1\le i \le 2M$, and $\ve{n}=\left[ \text{Re}\left( \ve{\bar{n}}^T \right) ,\text{Im}\left( \ve{\bar{n}}^T \right) \right] ^T$ is the $2M \times 1$ real-valued noise vector with $\ve{n}\sim \mathcal{N}\left( 0,0.5\sigma _{\mathrm{c}}^{2}\ve{I}_N \right)$. The WL transform virtually extends the complex data vector $\ve{\bar{y}}\in \mathbb{C}^M$ to the real vector\footnote{The log-likelihood ratio (LLR) computation in WL detection will be same as that adopted in CL processing. That is, after linear or SIC processing, an effective additive Gaussian noise (AWGN) channel is assumed and LLRs are computed based on, for example, the max-log approximation.} $\ve{y}\in \mathbb{R}^{2M}$. 
	
	

	One could employ the CL-ZF and CL-MMSE receivers to the original complex-valued data vector $\ve{\bar{y}}$ in (\ref{MimoMod}) for decoding the $N$ user data streams, regardless of the prior knowledge about real-valued transmission. Consequently, the maximum number of detectable users for the CL receivers would be limited to be $M$. However, with the WL transform, the number of receive antennas will be virtually doubled and thus up-to-$2M$ user data streams can simultaneously be decoded by applying the WL-ZF or WL-MMSE detector to the real-valued data vector $\ve{y}$ in \eqref{WLM}, as briefly discussed next.

	\subsection{Widely Linear Receivers}
	We assume that the maximum number of detectable users is restricted to be $N\le 2M$, and the channel state information (CSI) can be perfectly tracked by the BS receiver, i.e., $\ve{H\varPsi}^{\frac{1}{2}}$ is known at the receiver. In analogy to the CL multi-user detection, we consider the ZF and MMSE criteria for the WL receiver model (\ref{WLM}).  One can obtain the detection matrix $\ve{W}\in \mathbb{R}^{2M\times N}$ using the expressions for CL multi-user detection \cite{TseViswanathFundamentals05} as 
	\begin{equation}\label{WLD}
	\ve{W}=\begin{cases}
	\ve{H\varPsi }^{\frac{1}{2}}\left( \ve{\varPsi }^{\frac{1}{2}}\ve{H}^T\ve{H\varPsi }^{\frac{1}{2}} \right) ^{-1}\,\,\,\,\text{for\,\,WL~ZF}\\
	\ve{H\varPsi }^{\frac{1}{2}}\left( \ve{\varPsi }^{\frac{1}{2}}\ve{H}^T\ve{H\varPsi }^{\frac{1}{2}}+\frac{1}{2\snr}\ve{I}_N \right) ^{-1}\,\,\,\,\text{for\,\,WL~MMSE}.\\
	\end{cases}
	\end{equation}
	Applying $\ve{W}$ to the real-valued data model \eqref{WLM}, we obtain the output signal-to-interference-plus-noise ratios (SINRs) of WL-ZF and WL-MMSE detectors for the $n^{\mathrm{th}}$ data stream as \cite{JiangVaranasiPerformance11,GuiBalasubramanyaConnectivity19}
	\begin{subequations}\label{ZFSINR}
		\begin{align}
		\gamma _{\text{WL-ZF},n}&=\frac{2\snr\xi _n}{\left[ \left( \ve{H}^T\ve{H} \right) ^{-1} \right] _{n,n}}
		\\
		&=2\snr\xi _n\ve{h}_{n}^{T}\ve{P}_{\ve{H}_n}^{\bot}\ve{h}_n
		\end{align}
	\end{subequations}
	and
	\begin{subequations}\label{MMSESINR}
		\begin{align}
		\gamma _{\text{WL-MMSE},n}&=\frac{2\snr\xi _n}{\left[ \left( \ve{H}^T\ve{H}+\frac{1}{2\snr}\ve{\varPsi }^{-1} \right) ^{-1} \right] _{n,n}}-1
		\\
		&=2\snr\xi _n\ve{h}_{n}^{T}\ve{\tilde{P}}_{\ve{H}_n}^{\bot}\ve{h}_n
		\end{align}
	\end{subequations}
	where $\ve{h}_n$ is the $n^{\mathrm{th}}$ column of $\ve{H}$ and
	\begin{subequations}
		\begin{align}
		\ve{P}_{\ve{H}_n}^{\bot}&=\ve{I}_{2M}-\ve{H}_n\left( \ve{H}_{n}^{T}\ve{H}_n \right) ^{-1}\ve{H}_{n}^{T}\label{PH1}
		\\
		\ve{\tilde{P}}_{\ve{H}_n}^{\bot}&=\ve{I}_{2M}-\ve{H}_n\left( \ve{H}_{n}^{T}\ve{H}_n+\frac{1}{2\snr}\ve{\varPsi }_{n}^{-1} \right) ^{-1}\ve{H}_{n}^{T}.\label{PH2}
		\end{align}
	\end{subequations}
	In $\ve{P}_{\ve{H}_n}^{\bot}$ and $\ve{\tilde{P}}_{\ve{H}_n}^{\bot}$, $\ve{H}_n\in \mathbb{R}^{2M\times \left( N-1 \right)}$ is $\ve{H}$ with the $n^{\mathrm{th}}$ column removed and $\ve{\varPsi }_n\in \mathbb{R}^{\left( N-1 \right) \times \left( N-1 \right)}$ is $\ve{\varPsi }$ with the $n^{\mathrm{th}}$ column and $n^{\mathrm{th}}$ row removed.
	
	Similar to their CL counterparts, the WL receivers discussed above enable relatively low-complexity detection at the price of sub-optimal error-rate performance. Considering that the performance of CL receivers can be enhanced using SIC with channel-dependent ordering \cite{TseViswanathFundamentals05}, we attempt to improve the performance of WL receivers using the same methodology. Specifically, for CL-SIC receivers, the optimal ordering is to choose the user to decode such that the output SINR is maximized at each decoding stage \cite{FoschiniGoldenSimplified99}. We apply this SINR-maximization ordering rule to the WL-SIC receivers, and then obtain the $n^{\mathrm{th}}$-stage output SINRs of the WL-ZF-SIC and WL-MMSE-SIC receivers as
	\begin{subequations}
	\begin{align}
	\gamma _{\text{WL}-\text{ZSIC}}^{\left( n \right)}&=\max _n\left\{ \gamma _{\text{WL}-\text{ZF,}1}^{\left( n \right)},\gamma _{\text{WL}-\text{ZF,}2}^{\left( n \right)},\cdots ,\gamma _{\text{WL}-\text{ZF,}N-n+1}^{\left( n \right)} \right\} 
	\\
	\gamma _{\text{WL}-\text{MSIC}}^{\left( n \right)}&=\max _n\left\{ \gamma _{\text{WL}-\text{MMSE,}1}^{\left( n \right)},\gamma _{\text{WL}-\text{MMSE,}2}^{\left( n \right)},\cdots ,\gamma _{\text{WL}-\text{MMSE,}N-n+1}^{\left( n \right)} \right\} 
	\end{align}
	\end{subequations}
where $\gamma _{\text{WL}-\text{ZF,}i}^{\left( n \right)}$ and $\gamma _{\text{WL}-\text{MMSE,}i}^{\left( n \right)}$ represent the $n^{\mathrm{th}}$-stage output SINRs of the WL-ZF and WL-MMSE detectors, respectively, for the $i^{\mathrm{th}}$ data stream that remains after previous $(n-1)$ SIC stages, i.e., $1\le i\le N-n+1$.
	
	\section{Outage Performance of Widely Linear Receivers}
\label{s:performance}

 In this section, we consider the  WL-ZF and WL-MMSE receivers using $\ve{W}$ from \eqref{WLD}. We first briefly state the SINR and outage probability results from our previous work \cite{GuiBalasubramanyaConnectivity19}, which are then used to establish the diversity and coding gains. 

The outage probability for WL receivers can be expressed as \cite{TseViswanathFundamentals05}
	\begin{equation}\label{Pout}
	\mathcal{P}= \mathrm{Pr}\left( \frac{1}{2}\log \left( 1+\text{SINR} \right) \le R \right)
	\end{equation}
	where the factor $1/2$ is due to the use of real-valued transmission and $R$ is the target data rate in bits/sec/Hz. Based on the behaviour of $\mathcal{P}$ in the high SNR regime, the outage probability of a communication system as defined in \eqref{Pout} can be expressed as \cite{TseViswanathFundamentals05}
	\begin{equation}
	\mathcal{P}\left( \snr\right) \simeq \left( C\cdot \snr \right) ^{-d}
	\end{equation}
	where $C$ is the outage coding gain, $d$ is the outage diversity gain, and $\simeq$ denotes the asymptotic equality in the high SNR regime. Using the diversity and coding gains, we can make an analytical comparison between the asymptotic outage behaviours of WL and CL receivers.

	\subsection{Previous Results\cite{GuiBalasubramanyaConnectivity19}}
	\subsubsection{SINR Distribution}The SINR of the WL-ZF receiver is distributed according to 
\begin{equation}\label{e:SINRZF}
	\frac{\gamma _{\text{WL}-\text{ZF},n}}{\xi _n}\sim \snr\cdot\chi _{2M-N+1}^{2}\;,
	\end{equation}
	where $\chi _{2M-N+1}^{2}$ represents the standard Chi-squared distribution with $2M-N+1$ degrees of freedom. As it is difficult to directly characterize the SINR distribution of the WL-MMSE receiver, we decompose the scaled SINR of WL-MMSE receiver, i.e., $\frac{\gamma _{\text{WL}-\text{MMSE,}n}}{\xi _n}$ as the sum of the scaled WL-ZF SINR, i.e., $\frac{\gamma _{\text{WL}-\text{ZF,}n}}{\xi _n}$ and the residual term $\eta _{\text{WL,}n}$ given by
	\begin{equation}\label{MMSE-SINR}
	\frac{\gamma _{\text{WL}-\text{MMSE,}n}}{\xi _n}=\frac{\gamma _{\text{WL}-\text{ZF,}n}}{\xi _n}+\eta _{\text{WL,}n}\;,
	\end{equation}
	where the residual term $\eta _{\text{WL},n}\triangleq \frac{\gamma _{\text{WL}-\text{MMSE},n}}{\xi _n}-\frac{\gamma _{\text{WL}-\text{ZF},n}}{\xi _n}=2\snr\boldsymbol{h}_{n}^{T}\left( \boldsymbol{P}_{\boldsymbol{H}_n}^{\bot}-\boldsymbol{\tilde{P}}_{\boldsymbol{H}_n}^{\bot} \right) \boldsymbol{h}_n$ can be shown to be independent of $\gamma _{\text{WL-ZF},n}/\xi _n$ and approximated as 
	\begin{equation}\label{SINRdiff2}
	\eta _{\text{WL},n}\simeq\ve{h}_{n}^{T}\ve{H}_n\left( \ve{H}_{n}^{T}\ve{H}_n \right) ^{-1}\ve{\varPsi }_{n}^{-1}\left( \ve{H}_{n}^{T}\ve{H}_n \right) ^{-1}\ve{H}_{n}^{T}\ve{h}_n
	\end{equation}
	in the high SNR regime, i.e., $\snr \rightarrow \infty$. Intuitively, the term $\eta _{\text{WL},n}$ represents the power of the signal component ``hidden'' in the range of $\boldsymbol{H}_n$ that is recovered by the
	WL-MMSE receiver, but nulled out by the WL-ZF receiver.

	In general, it is difficult to derive the analytical distribution of SINR gap $\eta _{\text{WL},n}$. However, for the case of perfect compensation of large-scale fading  we have $\xi _n=\xi _{\text{PPC}}$, $1 \le n\le N$, and thus $\ve{\varPsi }=\xi_{\text{PPC}}\ve{I}_N$. Then, we can provide the closed-form distribution for the scaled version of $\eta _{\text{WL},n}$ as
	\begin{equation}\label{e:addterm}
	\frac{2M-N+2}{N-1}\xi _{\text{PPC}}\cdot \eta _{\text{WL},n}\sim \mathcal{F}_{N-\text{1},2M-N+2},
	\end{equation}
where $\mathcal{F}_{N-\text{1},2M-N+2}$ denotes the standard $F$-distribution with degree-of-freedom (DoF) parameters $d_1=N-\text{1}$ and $d_2=2M-N+2$ as defined in \cite[Footnote 2]{JiangVaranasiPerformance11}. Note that the F distribution is interpreted as the ratio of two chi-squared distributions with degrees of freedom $d_1$ and $d_2$ respectively. 

    \subsubsection{Outage Probability} Applying the SINR distributions from above to the outage probability \eqref{Pout}, we obtain
    \begin{subequations}\label{e:outageprob}
    \begin{align}
		\begin{split}
		\mathcal{P}_{\text{WL}-\text{ZF}}&=\text{Pr}\left( \frac{1}{2}\log \left( 1+\gamma _{\text{WL}-\text{ZF},n} \right) \le R \right )\\
		&=\text{Pr}\left( \frac{\gamma _{\text{WL}-\text{ZF},n}}{\snr\cdot\xi _n}\le \frac{2^{2R}-1}{\snr\cdot\xi _n} \right )\\ &=\mathbb{E}_{\xi _n}\left\{ F_{\chi _{2M-N+1}^{2}}\left( \frac{\gamma _{\text{WLT}}}{\snr\cdot\xi _n} \right) \right\} \label{Pzf}
		\end{split}
    	\\
		\begin{split}
		\mathcal{P}_{\text{WL}-\text{MMSE}}&=\text{Pr}\left( \frac{1}{2}\log \left( 1+\gamma _{\text{WL}-\text{MMSE},n} \right) \le R \right)
		\\
		&\simeq\text{Pr}\left( \frac{\gamma _{\text{WL}-\text{ZF},n}}{\snr\cdot\xi _n}+\frac{\eta _{\text{WL},n}}{\snr}\le \frac{2^{2R}-1}{\snr\cdot\xi _n} \right)\\
		&=\mathbb{E}_{\xi _n,\eta _{\text{WL},n}}\left\{ F_{\chi _{2M-N+1}^{2}}\left( \frac{\gamma _{\text{WLT}}}{\snr}\left[ \frac{1}{\xi _n}-\frac{\eta _{\text{WL},n}}{\gamma _{\text{WLT}}} \right]^+ \right) \right\}\label{Pmmse} 
		\end{split}
    \end{align}
    \end{subequations}
	where $\gamma _{\text{WLT}}=2^{2R}-1$ is the SINR threshold of WL receivers, $F_{\chi _{2M-N+1}^{2}}$ is the CDF of $\chi _{2M-N+1}^{2}$ and $[x]^+=\max(x,0)$. In the PPC case, the outage probability expressions can be simplified to
	\begin{subequations}
	\begin{align}
	\mathcal{P}_{\text{WL}-\text{ZF}}&=F_{\chi _{2M-N+1}^{2}}\left( \frac{\gamma_{\text{WLT}}}{\snr \cdot\xi _{\text{PPC}}} \right)
	\\
	\mathcal{P}_{\text{WL}-\text{MMSE}}&\simeq \mathbb{E}_{\eta _{\text{WL},n}}\left\{ F_{\chi _{2M-N+1}^{2}}\left( \frac{\gamma _{\text{WLT}}}{\snr\cdot \xi _{\text{PPC}}}\left[ 1-\frac{\xi _{\text{PPC}}}{\gamma _{\text{WLT}}}\eta _{\text{WL},n} \right] ^+ \right) \right\} 
	\end{align}
	\end{subequations}
	where the probability density function (PDF) of $\eta _{\text{WL},n}$ can be obtained according to \eqref{e:addterm}.

	\subsection{Comparative Analysis}
\label{s:compana}

We now use the expressions in \eqref{e:outageprob} to study the diversity and coding gains for WL detection. For this, we note that the polynomial approximation of $F_{\chi _{k}^{2}}\left( x \right)$ around $x=0$ is given by 
	\begin{equation}\label{Fexp}
	F_{\chi _{k}^{2}}\left( x \right) =\frac{1}{\left( k/2 \right) 2^{k/2}\Gamma \left( k/2 \right)}x^{k/2}+o\left( x^{k/2} \right),
	\end{equation}
where $\Gamma( a) \triangleq \int_0^{\infty}{e^{-t}t^{a-1}\mathrm{d}t}$ is the Gamma function. Considering the high SNR regime such that $\frac{\gamma _{\text{WLT}}}{\snr}\rightarrow 0$ and letting $d_{\text{WL}}=2M-\left( N-1 \right) /2$, we can expand the outage probability expressions of WL-ZF and WL-MMSE receivers using (\ref{Fexp}) in \eqref{Pzf} and \eqref{Pmmse}, as follows
\begin{align*}	
	\begin{split}
	\mathcal{P}_{\text{WL}-\text{ZF}}&\simeq \mathbb{E}_{\xi _n}\left\{ \frac{1}{d_{\text{WL}}2^{d_{\text{WL}}}\Gamma \left( d_{\text{WL}} \right)}\left( \frac{\gamma _{\text{WLT}}}{\snr\cdot\xi _n} \right) ^{d_{\text{WL}}} \right\}
	\\
	&=\left( \left[ \mathbb{E}_{\xi _n}\left\{ \frac{1}{d_{\text{WL}}2^{d_{\text{WL}}}\Gamma \left( d_{\text{WL}} \right)}\left( \frac{\gamma _{\text{WLT}}}{\xi _n} \right) ^{d_{\text{WL}}} \right\} \right] ^{-1/d_{\text{WL}}}\cdot\snr \right) ^{-d_{\text{WL}}}
	\\
	\end{split}
	\\
	\begin{split}
	\mathcal{P}_{\text{WL}-\text{MMSE}}&\simeq \mathbb{E}_{\xi _n,\eta _{\text{WL},n}}\left\{ \frac{1}{d_{\text{WL}}2^{d_{\text{WL}}}\Gamma \left( d_{\text{WL}} \right)}\left( \frac{\gamma _{\text{WLT}}}{\snr}\left[ \frac{1}{\xi _n}-\frac{\eta _{\text{WL},n}}{\gamma _{\text{WLT}}} \right] ^+ \right) ^{d_{\text{WL}}} \right\}
	\\
	&=\left( \left[ \mathbb{E}_{\xi _n,\eta _{\text{WL},n}}\left\{ \frac{1}{d_{\text{WL}}2^{d_{\text{WL}}}\Gamma \left( d_{\text{WL}} \right)}\left( \gamma _{\text{WLT}}\left[ \frac{1}{\xi _n}-\frac{\eta _{\text{WL},n}}{\gamma _{\text{WLT}}} \right] ^+ \right) ^{d_{\text{WL}}} \right\} \right] ^{-1/d_{\text{WL}}}\cdot \snr \right) ^{-d_{\text{WL}}}
	\end{split}
\end{align*}
It is clear that the outage diversity and coding gains of WL-ZF and WL-MMSE receivers are expressed as
	\begin{subequations}\label{WL-Gains}
	\begin{align}
	d_{\text{WL}-\text{ZF}}&=d_{\text{WL}-\text{MMSE}}=d_{\text{WL}}
	\\
	C_{\text{WL}-\text{ZF}}&=\frac{2\left( d_{\text{WL}}\Gamma( d_{\text{WL}}) \right)^{1/d_{\text{WL}}}}{\gamma_{\text{WLT}}}T_{\text{WL}-\text{ZF}}\label{GainZf}
	\\
	C_{\text{WL}-\text{MMSE}}&=\frac{2\left( d_{\text{WL}}\Gamma( d_{\text{WL}} ) \right)^{1/d_{\text{WL}}}}{\gamma_{\text{WLT}}}T_{\text{WL}-\text{MMSE}}\;,\label{GainMMSE}
	\end{align}
	\end{subequations}
	where
	\begin{subequations}
	\begin{align}
	T_{\text{WL}-\text{ZF}} &=\left[ \mathbb{E}_{\xi _n}\left\{ \xi _{n}^{-d_{\text{WL}}} \right\} \right] ^{-1/d_{\text{WL}}}\label{Twl-zf}
	\\
	T_{\text{WL}-\text{MMSE}}&=\left[ \mathbb{E}_{\xi _n,\eta _{\text{WL},n}}\left\{ \left( \left[\frac{1}{\xi _n}-\frac{\eta _{\text{WL},n}}{\gamma _{\text{WLT}}} \right]^+\right) ^{d_{\text{WL}}} \right\} \right] ^{-1/d_{\text{WL}}}\label{Twl-mmse}
	\end{align}
	\end{subequations}

	We observe that the WL-MMSE receiver does not provide an extra diversity gain over the WL-ZF receiver but rather yields an increase in coding gain since $T_{\text{WL}-\text{MMSE}}>T_{\text{WL}-\text{ZF}}$. As $\gamma _{\text{WLT}}\rightarrow \infty$, we have $T_{\text{WL}-\text{MMSE}}\rightarrow T_{\text{WL}-\text{ZF}}$, i.e., in a high data rate scenario, the WL-MMSE receiver has no significant outage performance superiority over the WL-ZF receiver. In contrast, in a low rate scenario where $\gamma _{\text{WLT}}\rightarrow 0$, the WL-MMSE receiver is expected to have a notably increased coding gain when compared to the WL-ZF receiver. It should be noted that these insights were presented using heuristic arguments in our previous work \cite{GuiBalasubramanyaConnectivity19}.
	
	For the purpose of a WL-vs.-CL comparison, we present the diversity and coding gains of CL-ZF and CL-MMSE receivers for decoding data streams generated by complex-valued modulations\footnote{Since the diversity and coding gains of CL-ZF/-MMSE receivers with complex-valued modulation can be derived using the same approach as that for deriving (\ref{WL-Gains}a)-(\ref{WL-Gains}c), their intermediate steps are omitted due to the space limit. Note that the outage probability expression for complex-valued transmission should be applied to the output SINR of CL-ZF/MMSE receivers derived from the complex-valued model \eqref{MimoMod}.}
	\begin{subequations}
		\begin{align}
		d_{\text{CL}-\text{ZF}}&=d_{\text{CL}-\text{MMSE}}=d_{\text{CL}}=M-( N-1)
		\\
		C_{\text{CL}-\text{ZF}}&=\frac{\left( d_{\text{CL}}\Gamma( d_{\text{CL}}) \right)^{1/d_{\text{CL}}}}{\gamma_{\text{CLT}}}T_{\text{CL}-\text{ZF}}
		\\
		C_{\text{CL}-\text{MMSE}}&=\frac{\left( d_{\text{CL}}\Gamma( d_{\text{CL}} ) \right)^{1/d_{\text{CL}}}}{\gamma_{\text{CLT}}}T_{\text{CL}-\text{MMSE}}\;,
		\end{align}
	\end{subequations}
	where $\gamma _{\text{CLT}}=2^R-1$ and
	\begin{subequations}
		\begin{align}
		T_{\text{CL}-\text{ZF}}&=\left[ \mathbb{E}_{\xi _n}\left\{ \xi_{n}^{-d_{\text{CL}}} \right\} \right] ^{-1/d_{\text{CL}}}
		\\
		T_{\text{CL}-\text{MMSE}}&=\left[ \mathbb{E}_{\xi _n,\eta _{\text{CL},n}}\left\{ \left( \left[\frac{1}{\xi _n}-\frac{\eta _{\text{CL},n}}{\gamma_{\text{CLT}}} \right]^+\right) ^{d_{\text{CL}}} \right\} \right]^{-1/d_{\text{CL}}}
		\end{align}
	\end{subequations}
	with $\eta_{\text{CL},n}$ being the SINR difference between the CL-ZF and CL-MMSE receivers. In the PPC case, the additional term $\eta_{\text{CL},n}$ is a scaled $F$-distributed random variable with DoF parameters $d_1=2( N-1) $ and $d_2=2( M-N+2)$ \cite{JiangVaranasiPerformance11}.
	
	For a fair comparison, we assume that the number of receive antennas $M$ at the BS and the target rate $R$ for each user are the same for the WL and CL systems. Denoting the number of users by $N_{\text{WL}}$ and $N_{\text{CL}}$ for the two system paradigms, we can make the following observations.
	\begin{itemize}
		\item For $N_{\text{WL}}=N_{\text{CL}}$, we have $d_{\text{WL}}>d_{\text{CL}}$, i.e., for detection of the same number of users, the diversity gain of WL receivers is greater than that of CL receivers. Therefore, in the high SNR regime, the WL system will have a better error-rate performance than the CL system. 
		
        \item For $N_{\text{WL}}< 2N_{\text{CL}}-1$, we have $d_{\text{WL}}> d_{\text{CL}}$. This means that the number of users can be nearly doubled with WL detection, while having an improved diversity gain.
		
		\item For $N_{\text{WL}} = 2N_{\text{CL}}-1$, we have $d_{\text{WL}}= d_{\text{CL}}$ and
		\begin{equation}\label{e:ratioofcodinggains}
		\frac{C_{\text{WL}-\text{ZF}}}{C_{\text{CL}-\text{ZF}}}=\frac{2\gamma _{\text{CLT}}}{\gamma _{\text{WLT}}}=\frac{2\left( 2^R-1 \right)}{2^{2R}-1} \triangleq \mathcal{L}( R ).
		\end{equation}
		Since $\mathcal{L}( R) <1$, the WL-ZF receiver suffers from a decrease in coding gain in this case. 
                However, as $\mathcal{L}( R) \rightarrow 1$ for $R\rightarrow  0$, the decrease in coding gain of the WL-ZF receiver becomes negligible for lower target rates. 
	
		Since the MMSE receivers converge to the ZF receivers in the high data rate regime, it follows that
		\begin{equation}
		\frac{C_{\text{WL}-\text{MMSE}}}{C_{\text{CL}-\text{MMSE}}}\approx \frac{C_{\text{WL}-\text{ZF}}}{C_{\text{CL}-\text{ZF}}}=\mathcal{L}( R), \ \ \text{for~large} \ R .
		\end{equation}
		Hence, the WL-MMSE receiver also has a decreased coding gain compared to the CL-MMSE counterpart. When $R$ is small, i.e., in the low rate regime, it is difficult to analytically compare $C_{\text{WL}-\text{MMSE}}$ and $C_{\text{CL}-\text{MMSE}}$. Through the numerical results, it will be shown that the decrease in coding gain for the WL-MMSE receiver also becomes negligible in this case.
		
		\item As stipulated from the WL MIMO channel model earlier, the largest number of users supported by the WL receivers with a positive diversity gain is $N_{\text{WL}}=2M$. 
		
	\end{itemize}
	
	\section{Outage Performance of WL-SIC Receivers}
\label{s:performancesic}

	In this section, we enhance the WL-ZF and WL-MMSE receivers through the application of SIC and analyze their outage performances in the high SNR regime. In order to facilitate the outage probability analysis for WL-SIC receivers, we first derive the polynomial approximation of the marginal distribution of ordered eigenvalues for real Wishart matrices .
	
	\subsection{Ordered-Eigenvalue Distribution of Real Wishart Matrix}
	Let $\ve{X}\in \mathbb{R}^{n\times m}$ be a matrix with its entries drawn  i.i.d.\ from a real-valued Gaussian distribution with zero mean and unit variance and assume $n \le m $. Furthermore, let $\lambda _k$ for $1\le k \le n$ denote the sorted eigenvalues of the real central Wishart matrix $\ve{XX}^T$ with  $0\le \lambda _1\le \lambda _2\le \cdots \le \lambda _n<\infty$.
	
\begin{theorem}\label{PolExpofCdf-k}
	The polynomial expansion of the marginal CDF of $\lambda _k$ around $\lambda_k=0$ can be expressed as
	\begin{equation}
\label{e:polyexp}
	\mathrm{Pr}(\lambda_k<\epsilon) =\beta_k\epsilon ^{d_k}+o(\epsilon ^{d_k}),
	\end{equation}
	where $\beta_k$ is a coefficient independent of $\epsilon$ and 
	\begin{equation}
	d_k =\frac{1}{2}k\left( m-n+k \right).
	\end{equation}
	Hence, the marginal CDF of $\lambda _k$ satisfies the following asymptotical property:
	\begin{equation}
	\lim_{\epsilon \rightarrow 0^+}\frac{\log\mathrm{Pr}(\lambda_k<\epsilon)}{\log \epsilon}=d_k ,\,\,1\le k\le n.
	\end{equation}
\end{theorem}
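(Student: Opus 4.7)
The plan is to work directly with the joint probability density function of the sorted eigenvalues $(\lambda_1,\ldots,\lambda_n)$ of the real Wishart matrix $\ve{XX}^T$, which on the ordered simplex $0\le \lambda_1\le\cdots\le\lambda_n<\infty$ takes the classical form
\begin{equation*}
f(\lambda_1,\ldots,\lambda_n) = K_{n,m}\prod_{1\le i<j\le n}(\lambda_j-\lambda_i)\prod_{i=1}^{n}\lambda_i^{(m-n-1)/2}e^{-\lambda_i/2},
\end{equation*}
for a normalization constant $K_{n,m}$, as recorded for instance in \cite{EdelmanEigenvalues89}. First I would write
\begin{equation*}
\mathrm{Pr}(\lambda_k<\epsilon) = \int f(\lambda_1,\ldots,\lambda_n)\,\mathbf{1}\{\lambda_k<\epsilon\}\,d\lambda_1\cdots d\lambda_n,
\end{equation*}
and note that, under the ordering, the event $\{\lambda_k<\epsilon\}$ coincides with $\{\lambda_1,\ldots,\lambda_k<\epsilon\}$, so the integration region is $\{0\le\lambda_1\le\cdots\le\lambda_k<\epsilon\le\lambda_{k+1}\le\cdots\le\lambda_n\}$ up to a set that contributes only a higher-order term.

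The core step is a scaling change of variables that isolates the small eigenvalues: set $\lambda_i=\epsilon\mu_i$ for $i=1,\ldots,k$ and keep $\lambda_{k+1},\ldots,\lambda_n$ unchanged. The $k$-fold Jacobian contributes $\epsilon^{k}$; the inner Vandermonde $\prod_{1\le i<j\le k}(\lambda_j-\lambda_i)$ contributes $\epsilon^{k(k-1)/2}$; and the singular factors $\prod_{i\le k}\lambda_i^{(m-n-1)/2}$ contribute $\epsilon^{k(m-n-1)/2}$. The cross-terms $\prod_{i\le k<j}(\lambda_j-\epsilon\mu_i)$ and the exponentials $e^{-\epsilon\mu_i/2}$ introduce no power of $\epsilon$, and converge pointwise as $\epsilon\to 0^+$ to $\prod_{j>k}\lambda_j^{k}$ and to $1$, respectively. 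Collecting $\epsilon$-powers produces exactly $\epsilon^{k+k(k-1)/2+k(m-n-1)/2}=\epsilon^{k(m-n+k)/2}=\epsilon^{d_k}$, matching the claimed exponent.

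It remains to identify $\beta_k$ as the limit of the residual integral, which after sending $\epsilon\to 0^+$ factorizes cleanly into (i) a bounded integral over $\{0\le\mu_1\le\cdots\le\mu_k\le 1\}$ of $\prod_{1\le i<j\le k}(\mu_j-\mu_i)\prod_{i=1}^{k}\mu_i^{(m-n-1)/2}$, which is finite since the condition $m\ge n$ gives $(m-n-1)/2\ge -1/2>-1$, taming the origin singularity, and (ii) a modified Wishart-type integral over $\{0\le\lambda_{k+1}\le\cdots\le\lambda_n<\infty\}$ of $\prod_{k<i<j\le n}(\lambda_j-\lambda_i)\prod_{j=k+1}^{n}\lambda_j^{k+(m-n-1)/2}e^{-\lambda_j/2}$, whose finiteness is guaranteed by the Gaussian decay at infinity. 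Both factors are strictly positive, so $\beta_k>0$. The main technical obstacle is justifying the exchange of the limit and the integral; I would address this by dominated convergence, using $(\lambda_j-\epsilon\mu_i)\le\lambda_j$ and $e^{-\epsilon\mu_i/2}\le 1$ for $\epsilon\in(0,1]$ to construct an $\epsilon$-independent integrable majorant. The $o(\epsilon^{d_k})$ remainder then gathers the Taylor residues of the cross-terms and exponentials together with the contribution from the regions where strictly more than $k$ eigenvalues fall below $\epsilon$; in each case a simple reapplication of the same scaling argument shows that at least one additional positive power of $\epsilon$ appears. The asymptotic $\log$-ratio statement follows immediately by taking logarithms in \eqref{e:polyexp} and dividing by $\log\epsilon$.
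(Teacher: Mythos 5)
Your proof is correct, but it takes a genuinely different and more elementary route than the paper. The paper separates the joint eigenvalue density into factors depending on $\ve{\phi}_k^-$ and $\ve{\phi}_k^+$, integrates out each block via the de Bruijn-type Pfaffian identity (Lemma~\ref{DetIntegral}) and the Pfaffian expansion (Lemma~\ref{PffExpansion}), and only then Taylor-expands the resulting Pfaffians around zero to read off the exponent. You instead restrict the joint density to the event $\{\lambda_1,\ldots,\lambda_k<\epsilon\le\lambda_{k+1}\}$, rescale $\lambda_i=\epsilon\mu_i$ for $i\le k$, and count powers of $\epsilon$ from the Jacobian ($\epsilon^k$), the inner Vandermonde ($\epsilon^{k(k-1)/2}$), and the singular prefactors ($\epsilon^{k(m-n-1)/2}$), which indeed sum to $d_k=\tfrac12 k(m-n+k)$; dominated convergence (with the majorant $(\lambda_j-\epsilon\mu_i)\le\lambda_j$, $e^{-\epsilon\mu_i/2}\le 1$) identifies $\beta_k$ as a product of two finite, strictly positive integrals, and the regions where more than $k$ eigenvalues fall below $\epsilon$ are absorbed into the remainder by rerunning the same scaling with $\ell>k$, giving $O(\epsilon^{d_\ell})=o(\epsilon^{d_k})$. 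Your positivity of $\beta_k$ is in fact needed for the final log-ratio limit and you establish it explicitly, which the paper leaves implicit. What your argument does not deliver, and what the paper's heavier Pfaffian machinery buys, is a computable expression for the coefficient: the intermediate objects $\ve{J}(\phi_k,\ve{\phi}_k^+)$ and $\ve{J}_\alpha$ are exactly what Theorem~\ref{PolExpofCdf-1} exploits to obtain the closed form of $\beta_1$ via \eqref{Jij}. For Theorem~\ref{PolExpofCdf-k} as stated, which only asserts existence of $\beta_k$ and the value of $d_k$, your scaling argument is sufficient and arguably cleaner.
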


The proof is provided in Appendix~\ref{p:theorem1}. Theorem~\ref{PolExpofCdf-k} has complemented the polynomial approximation of the marginal CDF of the ordered eigenvalues for complex-valued Wishart matrices provided in \cite{OrdonezPalomarHigh07}.

\begin{theorem}\label{PolExpofCdf-1}
	The polynomial expansion \eqref{e:polyexp} for the marginal CDF of the smallest eigenvalue $\lambda_1$ has the parameters
	\begin{subequations}
		\begin{align}
		d_1&=\frac{1}{2}( m-n+1)\label{e:d1def} \\
		\beta_1&=K_{nm}^{-1}d_{1}^{-1}\sqrt{|\ve{J}|}\label{e:betadef}
		\end{align}
	\end{subequations}
	with $K_{nm}=\left( \frac{2^m}{\pi} \right) ^{\frac{n}{2}}\prod_{i=1}^n{\Gamma \left( \frac{m-i+1}{2} \right) \Gamma \left( \frac{n-i+1}{2} \right)}$. For $n$ odd, the skew-symmetric matrix $\ve{J}$ is $(n-1)\times (n-1)$ with entries
	\begin{equation}\label{Jij}
[\ve{J}]_{i,j}=-[\ve{J}]_{j,i}=2^{b_i+b_j+1}\Gamma(b_i) \Gamma(b_j) \sum_{k=1}^{j-i}{\frac{2^{-( b_i+b_j-k)}\Gamma(b_i+b_j-k)}{\Gamma(b_i) \Gamma(b_{j-k+1})}}
	\end{equation}
	where $1\le i < j\le n-1$ and $b_i =\frac{1}{2}( m-n+1) +i$. For $n$ even, the skew-symmetric matrix $\ve{J}$ is $n\times n$ with $[\ve{J}]_{i,j}$ given by (\ref{Jij}) for $1\le i < j\le n-1$ and the additional entries given by $[\ve{J}]_{i,n}=-[\ve{J}]_{n,i}=2^{b_i}\Gamma(b_i)$ for $1\le i \le n-1$ and $[\ve{J}]_{n,n}=0$.
\end{theorem}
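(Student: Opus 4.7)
My plan has three stages: (i) extract the small-$\epsilon$ leading-order behavior of $\mathrm{Pr}(\lambda_1 < \epsilon)$ directly from the joint eigenvalue density, isolating the coefficient as an integral $I(0)$; (ii) rewrite $I(0)$ as a Pfaffian via de Bruijn's identity, so that $I(0) = \sqrt{|\ve{J}|}$; (iii) evaluate the entries of $\ve{J}$ explicitly by iterated integration by parts. Starting from the joint density of the ordered eigenvalues of $\ve{X}\ve{X}^T$,
\begin{equation*}
f(\lambda)=K_{nm}^{-1}\prod_{i<j}(\lambda_j-\lambda_i)\prod_i\lambda_i^{(m-n-1)/2}e^{-\lambda_i/2}
\end{equation*}
on $0<\lambda_1<\cdots<\lambda_n$, I would write $\mathrm{Pr}(\lambda_1<\epsilon)=\int_0^\epsilon f_{\lambda_1}(x)\,dx$. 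As $x\to 0$, $\prod_{j\ge 2}(\lambda_j-x)\to\prod_{j\ge 2}\lambda_j$ and $e^{-x/2}\to 1$; absorbing the extra $\prod_j\lambda_j$ factor raises the exponent of each $\lambda_j$ from $(m-n-1)/2$ to $(m-n+1)/2=d_1$. Integrating $x^{(m-n-1)/2}$ from $0$ to $\epsilon$ then gives $\epsilon^{d_1}/d_1$, recovering the exponent predicted by Theorem~\ref{PolExpofCdf-k} for $k=1$ and yielding $\beta_1=K_{nm}^{-1}d_1^{-1}I(0)$ with
\begin{equation*}
I(0)=\int_{0<\mu_1<\cdots<\mu_{n-1}}\prod_{i<j}(\mu_j-\mu_i)\prod_i\mu_i^{d_1}e^{-\mu_i/2}\,d\mu.
\end{equation*}

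For stage (ii), I would apply the Vandermonde identity $\prod_{i<j}(\mu_j-\mu_i)=\det[\mu_j^{i-1}]_{i,j=1}^{n-1}$ and absorb $\mu_j^{d_1}e^{-\mu_j/2}$ into column $j$, so that the integrand becomes $\det[\phi_i(\mu_j)]$ with $\phi_i(t)=t^{b_i-1}e^{-t/2}$ and $b_i=d_1+i$. De Bruijn's classical identity then gives $I(0)=\mathrm{Pf}(\ve{J})$, where $[\ve{J}]_{i,j}=\int_0^\infty\!\!\int_0^\infty\sgn(y-x)\phi_i(x)\phi_j(y)\,dx\,dy$. The parity of $n-1$ produces the two cases in the theorem: when $n$ is odd ($n-1$ even), the matrix is $(n-1)\times(n-1)$ as written; when $n$ is even ($n-1$ odd), the odd-dimensional de Bruijn rule appends a last row/column, with $[\ve{J}]_{i,n}=\int_0^\infty\phi_i(y)\,dy=2^{b_i}\Gamma(b_i)$, matching the stated augmenting entries. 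The identity $\det\ve{J}=\mathrm{Pf}(\ve{J})^2$ for skew-symmetric $\ve{J}$ then delivers $I(0)=\sqrt{|\ve{J}|}$, as in~\eqref{e:betadef}.

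For stage (iii), I would decompose $[\ve{J}]_{i,j}=\int_0^\infty\phi_j\Phi_i\,dy-\int_0^\infty\phi_i\Phi_j\,dy$ with $\Phi_k(y)=\int_0^y\phi_k(x)\,dx$, and exploit the integration-by-parts recurrence $\Phi_k(y)=-2y^{b_k-1}e^{-y/2}+2(b_k-1)\Phi_{k-1}(y)$. Iterating $j-i$ times, which is permitted because $b_j-b_i=j-i$ is a positive integer, yields
\begin{equation*}
\Phi_j(y)=2^{j-i}\frac{\Gamma(b_j)}{\Gamma(b_i)}\Phi_i(y)-\sum_{k=0}^{j-i-1}2^{k+1}\frac{\Gamma(b_j)}{\Gamma(b_j-k)}y^{b_j-1-k}e^{-y/2}.
\end{equation*}
Substituting this, using $\int_0^\infty\phi_i\Phi_i\,dy=\tfrac{1}{2}\Phi_i(\infty)^2=2^{2b_i-1}\Gamma(b_i)^2$ and $\int_0^\infty y^{b_i+b_j-2-k}e^{-y}\,dy=\Gamma(b_i+b_j-1-k)$, and invoking the product relation $\int\!\!\int \phi_i(x)\phi_j(y)\,dx\,dy=2^{b_i+b_j}\Gamma(b_i)\Gamma(b_j)$ to eliminate the $\{y>x\}$-integral, the bulk terms proportional to $\Gamma(b_i)\Gamma(b_j)$ cancel and what remains is precisely the finite sum in~\eqref{Jij} after the reindexing $k\mapsto k+1$. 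The main obstacle I anticipate is the careful bookkeeping in the iterated integration by parts; the clean termination of the sum at $k=j-i$ is essential and traces back to the integer gap $b_j-b_i$.
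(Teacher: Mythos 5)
Your proposal is correct and follows the same essential route as the paper's Appendix~B: factor the joint eigenvalue density so that the $\phi_1$-dependence separates, apply the de Bruijn/Pfaffian integration identity (the paper's Lemma~\ref{DetIntegral}) to marginalize the remaining ordered eigenvalues, pass to the $\phi_1\to 0$ limit to read off $d_1=\tfrac12(m-n+1)$ and $\beta_1=K_{nm}^{-1}d_1^{-1}\mathrm{Pf}[\ve{J}]$, and handle the two parity cases of $n$ exactly as the odd/even branches of that lemma dictate. The only reordering is cosmetic: you take the limit inside the integral before invoking de Bruijn, whereas the paper applies de Bruijn at general $\phi_1$ to get $\mathrm{Pf}[\ve{J}(\phi_1)]$ and then uses continuity of the Pfaffian; both yield the same $\ve{J}=\ve{J}(0)$ with weight $t^{b_i-1}\e^{-t/2}$. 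The one substantive difference is Step~4: the paper evaluates $[\ve{J}]_{i,j}$ by rewriting the double integral through the lower incomplete Gamma function and citing a known recursion of Chiani for $I(b_j,b_i;\infty)$, while you derive the same closed form self-containedly via the integration-by-parts recurrence $\Phi_k(y)=-2y^{b_k-1}\e^{-y/2}+2(b_k-1)\Phi_{k-1}(y)$; I checked that after the cancellation of the $2^{b_i+b_j}\Gamma(b_i)\Gamma(b_j)$ terms and the reindexing $k\mapsto k+1$ your sum reproduces \eqref{Jij} exactly, so your version is a legitimate (and arguably more transparent) replacement for the citation.
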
	

	The proof is provided in Appendix~\ref{p:theorem2}.

	\subsection{Asymptotic Outage Probability}
	Following the analysis of diversity-multiplexing tradeoff for CL-SIC receivers in \cite{JiangVaranasiPerformance11}, one can show that, for any ordering rule, the outage probability of WL-SIC receivers for the $n^\mathrm{th}$ user is asymptotically equal to that for the first-layer detection. Hence, for the $n^\mathrm{th}$ user, the outage probability of WL-ZF-SIC and WL-MMSE-SIC receivers with SINR-maximization ordering is given by
	\begin{subequations}\label{Psic}
		\begin{align}
		\mathcal{P}_{\text{WL}-\text{ZSIC},n}&\simeq \Pr\left( \max_{1\le i\le N}\left\{ \gamma_{\text{WL}-\text{ZF},i} \right\} \le \gamma_{\text{WLT}} \right) \label{e:Psica}
		\\
		\mathcal{P}_{\text{WL}-\text{MSIC},n}&\simeq \Pr\left( \max_{1\le i\le N}\left\{ \gamma_{\text{WL}-\text{MMSE},i} \right\} \le \gamma_{\text{WLT}} \right).\label{e:Psicb}
		\end{align}
	\end{subequations}
	If the output SINRs were independent, we could readily calculate $\mathcal{P}_{\text{WL}-\text{ZSIC,}n}$ and $\mathcal{P}_{\text{WL}-\text{MSIC,}n}$ using the simple order statistics for independent random variables as in \cite{DingYangOn14}. However, from (\ref{ZFSINR}b) and (\ref{MMSESINR}b), we observe that the output SINRs are highly correlated, and thus a different approach is needed. 
	
	\subsubsection{WL-ZF-SIC Receiver} For convenience, we introduce 
	\begin{equation}
	\omega_n\triangleq \left[ \left( 2\ve{H}^T\ve{H} \right) ^{-1} \right] _{n,n},
	\end{equation}
such that (\ref{ZFSINR}a) becomes
	\begin{equation}
	\gamma_{\text{WL}-\text{ZF,}n}=\snr\frac{\xi _n}{\omega _n}.
	\end{equation}
		Let $2\ve{H}^T\ve{H}=\ve{V}\ve{\varLambda V}^T$ be the eigenvalue decomposition, where $\ve{V}\in \mathbb{R}^{N\times N}$ is an orthonormal matrix 
 and $\ve{\varLambda }=\text{diag}\left\{ \lambda_1,\lambda_2,\ldots,\lambda_N \right\}$ is diagonal with the ordered eigenvalues $\lambda_1 \le \cdots \le \lambda_N$. Since
	\begin{equation}\label{omega}
	\omega_n=\left[ \ve{V\varLambda }^{-1}\ve{V}^T \right] _{n,n}=\ve{v}_{n}^{T}\ve{\varLambda }^{-1}\ve{v}_n=\sum_{i=1}^N{v_{n,i}^{2}\lambda_{i}^{-1}}\ge v_{n,1}^{2}\lambda _{1}^{-1},
	\end{equation}
	where $\ve{v}_n$ is the $n^{\mathrm{th}}$ column of $\ve{V}^T$ and $v_{n,i}$ is the $i^{\mathrm{th}}$ entry of $\ve{v}_n$. Then, the output SINR of the WL-ZF receiver is upper bounded as
	\begin{equation}
	\snr\frac{\xi _n}{\omega_n}\le \snr\cdot\lambda _1\frac{\xi _n}{v_{n,1}^{2}}.
	\end{equation}
	Hence, we have the lower bound 
	\begin{equation}
\mathcal{P}_{\text{WL}-\text{ZF},n}\ge \mathrm{Pr}\left( \snr\cdot \lambda _1\frac{\xi _n}{v_{n,1}^{2}}\le \gamma _{\text{WLT}} \right).
	\end{equation}
	Using Theorem~\ref{PolExpofCdf-k} 
 and a proof similar to that for the CL case \cite[Lemma~VI.1]{JiangVaranasiPerformance11} , we can show that the lower bound above is asymptotically tight, i.e.,
	\begin{equation}\label{Pzfasym}
	\mathcal{P}_{\text{WL}-\text{ZF},n}\simeq \mathrm{Pr}\left( \snr\cdot \lambda _1\frac{\xi _n}{v_{n,1}^{2}}\le \gamma _{\text{WLT}} \right).
	\end{equation}
	Hence, the asymptotic outage performance of the WL-ZF receiver is determined by the smallest eigenvalue of real Wishart matrix $2\ve{H}^T\ve{H}$. Applying this result to \eqref{e:Psica} and defining $\theta_n\triangleq v_{n,1}^{2}/\xi_n$ and $\theta _{\min}\triangleq \min_n\{\theta_n\}$, we obtain the asymptotic outage probability for the WL-ZF-SIC receiver as 
	\begin{equation}\label{Pzfsic}
	\mathcal{P}_{\text{WL}-\text{ZSIC},n}\simeq \mathrm{Pr}\left( \max_n\left\{ \snr\frac{\lambda_1}{\theta_n} \right\} \le \gamma_{\text{WLT}} \right)=\mathrm{Pr}\left( \lambda_1\le \frac{\gamma_{\text{WLT}}}{\snr}\theta_{\min} \right) 
	\end{equation}
	
Since $2\ve{H}^T\ve{H}$ is a real central Wishart matrix, the orthonormal matrix $\ve{V}$ is Haar-distributed and independent of $\ve{\varLambda }$ \cite[Sec.~3.2.5]{muirhead:2005}, which implies the statistical independence between $\{ \theta_1,\ldots,\theta_N \}$ and $\{ \lambda_1,\ldots,\lambda_N\}$. From  (\ref{Pzfsic}) and Theorem~\ref{PolExpofCdf-1}, it follows that 
	\begin{equation}
	\mathcal{P}_{\text{WL}-\text{ZSIC},n}\simeq \mathbb{E}_{\theta _{\min}}\left\{ \beta_{\text{WL}}\left( \frac{\gamma_{\text{WLT}}}{\snr}\theta_{\min} \right)^{d_{\text{WL}}} \right\},
	\end{equation}
	where $\beta _{\text{WL}}=K_{N,2M}^{-1}d_{\text{WL}}^{-1}\sqrt{\left| \ve{J}_{\text{WL}} \right|}$ and $\ve{J}_{\text{WL}}$ equals to the matrix $\ve{J}$ in Theorem~\ref{PolExpofCdf-1} with $n=N$ and $m=2M$. From this, the diversity and coding gains follow as 
	\begin{subequations}
		\begin{align}
		d_{\text{WL}-\text{ZSIC}}&=d_{\text{WL}}
		\\
		C_{\text{WL}-\text{ZSIC}}&=\frac{\beta _{\text{WL}}^{-1/d_{\text{WL}}}}{\gamma _{\text{WLT}}}\left[ \mathbb{E}_{\theta _{\min}}\left\{ (\theta _{\min})^{d_{\text{WL}}} \right\} \right] ^{-1/d_{\text{WL}}}.\label{ZF-SICgain}
		\end{align}
	\end{subequations}
    
From (\ref{Pzfasym}) we also obtain an alternative expression of the coding gain of WL-ZF receiver 
	\begin{equation}\label{ZFgain2}
	C_{\text{WL}-\text{ZF}}=\frac{\beta_{\text{WL}}^{-1/d_{\text{WL}}}}{\gamma_{\text{WLT}}}\left[ \mathbb{E}_{\theta _n}\left\{ (\theta _{n})^{d_{\text{WL}}} \right\} \right]^{-1/d_{\text{WL}}},
	\end{equation}
	which can be shown to be equal to (\ref{GainZf}).
	
	Comparing \eqref{ZF-SICgain} and \eqref{ZFgain2}, we observe that the SIC operation with channel-dependent ordering, although not improving the diversity gain,  increases the coding gain. Moreover, the coding gain $C_{\text{WL}-\text{ZSIC}}$ increases with $N$, because the SIC receiver has more candidate users to choose from for ordering.	

	\subsubsection{WL-MMSE-SIC Receiver}It is not tractable to analyze the outage probability of the WL-MMSE-SIC receiver directly from (\ref{MMSESINR}a), since the diagonal entries of $\ve{\varPsi }$ are unequal. Therefore, we turn to the lower and upper bounds of (\ref{MMSESINR}a). In particular, one can easily show that the output SINR (\ref{MMSESINR}a) can be bounded by
	\begin{equation}\label{SinrConstraint}
	\gamma_n^{\text{LB}}\le \gamma_{\text{WL}-\text{MMSE,}n}\le \gamma_n^{\text{UB}},
	\end{equation}
	where
	\begin{subequations}
		\begin{align}
		\gamma_n^{\text{LB}}&=\frac{\snr\cdot \xi_n}{\left[ \left( 2\ve{H}^T\ve{H}+\frac{\xi_{\max}^{-1}}{\snr}\ve{I}_N \right) ^{-1} \right]_{n,n}}-1
		\\
		\gamma_n^{\text{UB}}&=\frac{\snr\cdot \xi_n}{\left[ \left( 2\ve{H}^T\ve{H}+\frac{\xi_{\min}^{-1}}{\snr}\ve{I}_N \right) ^{-1} \right]_{n,n}}-1
		\end{align}
	\end{subequations}
	with $\xi_{\max}\triangleq \max_n\{ \xi_n\}$ and $\xi_{\min}\triangleq \min_n\{ \xi_n\}$. Following the same derivations for the WL-ZF-SIC receiver above, we arrive at
	\begin{equation}
	\mathcal{P}_{\text{WL}-\text{MSIC},n}\simeq \left( C_{\text{WL}-\text{MSIC},n}\snr \right) ^{-d_{\text{WL}}},
	\end{equation}
	where $C_{\text{WL}-\text{MSIC,}n}$ is the coding gain bounded as $C_n^{\text{LB}}\le C_{\text{WL}-\text{MSIC,}n}\le C_n^{\text{UB}}$ with 
\begin{align}\label{ULB}
	C_{n}^{\text{LB}}&=\frac{\beta _{\text{WL}}^{-\text{1/}d_{\text{WL}}}}{\gamma _{\text{WLT}}+1}\left[ \mathbb{E}_{\theta _{\min},\xi _{\max}}\left\{ \left( \left[ \theta _{\min}-\frac{\xi _{\max}^{-1}}{\gamma _{\text{WLT}}+1} \right] ^+ \right) ^{d_{\text{WL}}} \right\} \right] ^{-\text{1/}d_{\text{WL}}}
	\\
	C_{n}^{\text{UB}}&=\frac{\beta _{\text{WL}}^{-\text{1/}d_{\text{WL}}}}{\gamma _{\text{WLT}}+1}\left[ \mathbb{E}_{\theta _{\min},\xi _{\min}}\left\{ \left( \left[ \theta _{\min}-\frac{\xi _{\min}^{-1}}{\gamma _{\text{WLT}}+1} \right] ^+ \right) ^{d_{\text{WL}}} \right\} \right] ^{-\text{1/}d_{\text{WL}}}
	\end{align}
obtained from $\gamma_n^{\text{LB}}$ and $\gamma_n^{\text{UB}}$, respectively.

	Since it is difficult to gain insights into the effect of SIC on the coding gain for WL-MMSE, we consider an alternative path to bound $C_{\text{WL}-\text{MSIC},n}$. Combining (\ref{MMSE-SINR}) and (\ref{Pzfasym}) yields
	\begin{align}
	\label{Pmmsesic}
		\mathcal{P}_{\text{WL}-\text{MSIC},n}&\simeq \Pr\left( \max_n\left\{ \snr\cdot\lambda_1\frac{\xi_n}{v_{n1}^{2}}+\xi_n\eta_{\text{WL},n} \right\} \le \gamma_{\text{WLT}} \right)\\
	  &\label{e:pmmsesic2}
	\le \Pr\left( \lambda_1\le \frac{\gamma_{\text{WLT}} }{\snr}\min_n\left[ \frac{v_{n1}^{2}}{\xi_n}-\frac{\eta_{\text{WL},n}}{ \gamma_{\text{WLT}}} v_{n1}^2\right] \right),
	\end{align}
	where for \eqref{e:pmmsesic2} we upper-bounded the asymptotic outage probability assuming an ordering rule which does not necessarily maximize the SINR. Defining  
	\begin{equation}
           \label{e:varThetamin}
	\vartheta_n\triangleq\left[ \frac{v_{n1}^{2}}{\xi _n}-\frac{\eta_{\text{WL},n}}{\gamma_{\text{WLT}}}v_{n1}^{2} \right],\quad \vartheta_{\min}\triangleq \min\limits_{n}\vartheta_n,
	\end{equation}
        and using Theorem~\ref{PolExpofCdf-1} and the fact that $\lambda _1$ and $\vartheta_{\min}$ are statistically independent, 
 \begin{equation}
 \begin{split}
	\mathrm{Pr}\left( \lambda _1\le \frac{\gamma _{\text{WLT}}}{\snr}\vartheta _{\min} \right) \simeq \mathbb{E}_{\vartheta _{\min}}\left\{ \beta _{\text{WL}}\left( \left[ \frac{\gamma _{\text{WLT}}}{\snr}\vartheta _{\min} \right] ^+ \right) ^{d_{\text{WL}}} \right\}
 \end{split}
 \end{equation}
 Accordingly, the coding gain of the WL-MMSE-SIC receiver is lower bounded as
	\begin{equation}\label{Cwl_msic}
	C_{\text{WL}-\text{MSIC}}\ge \frac{\beta _{\text{WL}}^{-\text{1/}d_{\text{WL}}}}{\gamma _{\text{WLT}}}\left[ \mathbb{E}_{\vartheta _{\min}}\left\{ \left( \left[ \vartheta _{\min} \right] ^+ \right) ^{d_{\text{WL}}} \right\} \right] ^{-\text{1/}d_{\text{WL}}}
	\end{equation}
	For a comparison, coding gain of the WL-MMSE receiver can be re-expressed as
	\begin{equation}\label{Cwl_mmse}
	C_{\text{WL}-\text{MMSE}}=\frac{\beta _{\text{WL}}^{-\text{1/}d_{\text{WL}}}}{\gamma _{\text{WLT}}}\left[ \mathbb{E}_{\vartheta _n}\left\{ \left( \left[ \vartheta _n \right] ^+ \right) ^{d_{\text{WL}}} \right\} \right] ^{-\text{1/}d_{\text{WL}}}
	\end{equation}
	which follows from \eqref{Pmmsesic} without the $\max$-operator. Comparing \eqref{Cwl_msic} and \eqref{Cwl_mmse}, we observe that, similar to the WL-ZF case, SIC with channel-dependent ordering increases the coding gain of the WL-MMSE receiver. The difference to the WL-ZF case is the contribution of the SINR difference term $\eta_{\text{WL,}n}$ in $\vartheta_n$, which creates stronger fluctuation in the SINR. Hence, we expect that coding gain improvement due to SIC is more pronounced for WL-MMSE than for WL-ZF.

	\subsubsection{WL-SIC Receivers With PPC}In the PPC case, we obtain the coding gain of the WL-ZF-SIC receiver as
	\begin{equation}\label{Czsic}
	C_{\text{WL}-\text{ZSIC}}^{\text{PPC}}=\frac{\beta_{\text{WL}}^{-1/d_{\text{WL}}}\xi_{\text{PPC}}}{\gamma_{\text{WLT}}}\left[ \mathbb{E}_{u_{\min}}\left\{ (u_{\min})^{d_{\text{WL}}} \right\} \right]^{-1/d_{\text{WL}}},
	\end{equation}
	where $u_{\min}\triangleq \min_n\{u_n\}$ and $u_n\triangleq |v_{n,1}|^2$. Since $\xi _{\min}=\xi _{\max}=\xi _{\text{PPC}}$, the lower and upper bounds in \eqref{ULB} coincide, and thus the coding gain of the WL-MMSE-SIC receiver can be calculated as
	\begin{equation}
	C_{\text{WL}-\text{MSIC}}^{\text{PPC}}=\frac{\beta _{\text{WL}}^{-\text{1/}d_{\text{WL}}}\xi _{\text{PPC}}}{\gamma _{\text{WLT}}+1}\left[ \mathbb{E}_{u_{\min}}\left\{ \left( \left[ u_{\min}-\frac{1}{\gamma _{\text{WLT}}+1} \right] ^+ \right) ^{d_{\text{WL}}} \right\} \right] ^{-\text{1/}d_{\text{WL}}}.
	\end{equation}
	For such a system, the SIC operation can only exploit the SINR fluctuation from the small-scale fading, and we expect that resulting improvement in coding gain is less significant than that for the system without power control. 
	
	\subsection{Comparative Analysis}
\label{s:performancesiccomparative}
	As the diversity gain does not change when applying SIC, we only compare the coding gains of WL-SIC and CL-SIC receivers. Furthermore, since for $N_{\text{WL}}<2N_{\text{CL}}-1$, the WL-SIC receivers are expected to have a better outage performance than the CL-SIC receivers in the high SNR regime due to $d_{\text{WL}}>d_{\text{CL}}$ (see Section~\ref{s:compana}), we are interested in the case of $N_{\text{WL}}=2N_{\text{CL}}-1$. Finally, for analytical tractability of the comparison, we focus on the PPC case and comment on the general cases at the end of this section.
	
	First, we combine \eqref{GainZf} and (\ref{Czsic}) for the coding gain of WL-ZF with PPC to obtain the identity
	\begin{equation}\label{ZFgain3}
  \beta_{\text{WL}}= 2^{-d_{\text{WL}}}\left( d_{\text{WL}}\Gamma(d_{\text{WL}})\right)^{-1}\left[ \mathbb{E}_{u_n}\{ (u_{n})^{d_{\text{WL}}}\} \right]^{-1},
	\end{equation}
	with which we can further rewrite (\ref{Czsic}) as
	\begin{equation}
\label{e:cpcwl}
	C_{\text{WL}-\text{ZSIC}}^{\text{PPC}}=\frac{2\left( d_{\text{WL}}\Gamma \left( d_{\text{WL}} \right) \right)^{1/d_{\text{WL}}}\xi _{\text{PPC}}}{\gamma _{\text{WLT}}}\left[ \mathbb{E}_{u_n}\{ (u_{n})^{d_{\text{WL}}}\} \right]^{1/d_{\text{WL}}}\left[ \mathbb{E}_{u_{\min}}\{ (u_{\min})^{d_{\text{WL}}}\} \right]^{-1/d_{\text{WL}}}.
	\end{equation}
	Then, following the derivations for the WL-ZF-SIC case, the coding gain for the CL-ZF-SIC receiver is obtained as
	\begin{equation}
\label{e:cpclin}
	C_{\text{CL}-\text{ZSIC}}^{\text{PPC}}=\frac{\left(d_{\text{CL}}\Gamma(d_{\text{CL}}) \right)^{1/d_{\text{CL}}}\xi_{\text{PPC}}}{\gamma_{\text{CLT}}}\left[ \mathbb{E}_{\mu_n}\{(\mu_{n})^{d_{\text{CL}}}\} \right]^{1/d_{\text{CL}}}\left[\mathbb{E}_{\mu_{\min}}\{(\mu_{\min})^{d_{\text{CL}}} \} \right]^{-1/d_{\text{CL}}},
	\end{equation}
	where $\mu_n=|\nu_{n,1}|^2$, $\mu_{\min}=\min_n\{\mu_n\}$, $\nu_{n,1}$ is the first entry of vector $\ve{\nu}_n$, and $\ve{\nu}_n$ is a unit-length eigenvector of the complex Wishart matrix $\boldsymbol{\bar{H}}^H\boldsymbol{\bar{H}}$.
	
        Relating \eqref{e:cpcwl} and \eqref{e:cpclin} for the case $N_{\text{WL}}=2N_{\text{CL}}-1$ and thus $d_{\text{WL}}=d_{\text{CL}}\triangleq d$, we have 
	\begin{equation}
	\frac{C_{\text{WL}-\text{ZSIC}}}{C_{\text{CL}-\text{ZSIC}}}=\frac{2\gamma_{\text{CLT}}}{\gamma_{\text{WLT}}}
\frac{\left[ \mathbb{E}_{u_n}\{ (u_{n})^{d} \} /\mathbb{E}_{u_{\min}}\{ (u_{\min})^{d} \} \right]^{1/d}}{\left[ \mathbb{E}_{\mu _n}\{ (\mu_{n})^{d} \} /\mathbb{E}_{\mu_{\min}}\{ (\mu_{\min})^{d}\} \right]^{1/d}}.
	\end{equation}
	Furthermore, we show in Appendix~\ref{App-C} that
	\begin{subequations}\label{Moments}
		\begin{align}
\label{e:momentcl}
		\mathbb{E}_{\mu _n}\{ (\mu _{n})^{d}\} /\mathbb{E}_{\mu _{\min}}\{ (\mu _{\min})^{d}\} &=(N_{\text{CL}})^{d},
		\\
 \label{e:momentwl}
	\mathbb{E}_{u_n}\{( u_n) ^d\}/\mathbb{E}_{\mu _{\min}}\{( u_{\min})^d\}&\gtrsim\left( N_{\text{WL}} \right) ^d,
		\end{align}
	\end{subequations}
where $\gtrsim$ means ``asymptotically'' for $N_{\text{WL}}\to\infty$. Thus, we obtain
	\begin{equation}\label{Cratio}
	\frac{C_{\text{WL}-\text{ZSIC}}}{C_{\text{CL}-\text{ZSIC}}}\gtrsim \frac{2\gamma _{\text{CLT}}}{\gamma _{\text{WLT}}}\frac{N_{\text{WL}}}{N_{\text{CL}}}=\frac{2\gamma _{\text{CLT}}}{\gamma _{\text{WLT}}}\frac{2N_{\text{CL}}-1}{N_{\text{CL}}}\approx 2\mathcal{L}\left( R \right).
\end{equation}
As $\mathcal{L}( R ) \rightarrow 1$ for $R \rightarrow 0$ (see \eqref{e:ratioofcodinggains}), we have $C_{\text{WL}-\text{ZSIC}}/C_{\text{CL}-\text{ZSIC}} \gtrsim2$. That is, in low data-rate scenarios, the WL-ZF-SIC receiver has a greater coding gain than the CL-ZF-SIC receiver, which is contrary to our earlier result for the WL-ZF receiver. This can be explained by the fact that the WL receiver has more candidate users to choose from for ordering and thus the coding gain improvement by SIC is considerably higher for WL than for CL receivers. On the other hand, for high data-rate scenarios, it is hard to completely compensate the decrease in coding gain of the WL-ZF receiver, even with the use of SIC. Consequently, $C_{\text{WL}-\text{ZSIC}}<C_{\text{CL}-\text{ZSIC}}$ for large $R$.

In summary, as the WL receivers are capable of detecting more users, they benefit more from SIC than the CL receivers, which is manifested in terms of the improvement in coding gain, especially in low-rate scenarios. This improvement is analogous to the multi-user diversity gain in opportunistic communications \cite{TseViswanathFundamentals05}. For this reason, we  expect that WL-SIC compares even more favourably to CL-SIC for transmission without or with limited power control, since the instantaneous received powers fluctuate more and thus the system provides more diversity.

\section{Numerical Results}
\label{s:results}

In this section, we present numerical performance results to illustrate \textit{(a)} the convergence of the asymptotic results in Theorems~\ref{PolExpofCdf-k} and \ref{PolExpofCdf-1}, \textit{(b)} the advantages for various types of WL receivers compared to their CL counterparts in terms of outage performance, packet drop-out probability and eventually the system throughput.

\subsection{CDFs of Smallest Eigenvalues}
Theorems~\ref{PolExpofCdf-k} and \ref{PolExpofCdf-1} provide the asymptotic result 
\begin{equation}
F_{\lambda_1}(\epsilon)=\Pr(\lambda_1<\epsilon)=\frac{\sqrt{|\ve{J}|}}{K_{nm}\cdot (m-n+1)/2}\epsilon^{(m-n+1)/2}+o(\epsilon^{(m-n+1)/2})
\end{equation}
for the smallest eigenvalue of  real-valued central Wishart matrices with $n\le m$, where $\ve{J}$ and $K_{nm}$ are given in Theorem~\ref{PolExpofCdf-1}. Fig.~\ref{EigCdf} shows the asymptotic approximation together with the empirical result from simulating $\lambda_1$ (curves labeled with $k=1$) for several combinations of $n$ and $m$. For clarity, the $x$-axis is set to be $-\log _{10}(\epsilon)$. We observe that the asymptotic approximations converge well to the empirical results as $\epsilon \rightarrow 0$. Fig.~\ref{EigCdf}  also includes the empirical CDF for the second ($k=2$) smallest eigenvalue in the case of $n=m=2$. We note that this CDF curve runs parallel to the one for the smallest eigenvalue ($k=1$) and $n=3$ and $m=6$, which has the same polynomial order $\frac{1}{2}k\left( m-n+k \right) =2$, as predicted by Theorem~\ref{PolExpofCdf-k}.
\begin{figure}[t]
\centering
		\includegraphics[width=0.65\columnwidth]{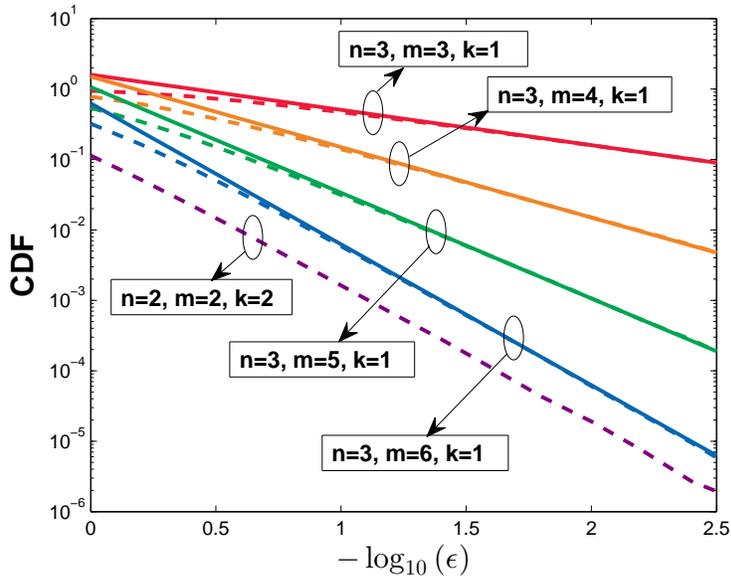}
\vspace*{-5mm}
		\caption{CDF of the $k$th smallest eigenvalue of real-valued central Wishart matrix $\ve{XX}^T$ with $\ve{X}\in \mathbb{R}^{n\times m}$. Solid lines: Asymptotic approximations from Theorems~\ref{PolExpofCdf-k} and \ref{PolExpofCdf-1}. Dashed lines: empirical results from simulation.}\label{EigCdf}
\vspace*{-5mm}
\end{figure}

\subsection{Outage Performance}
\label{outage}
For the outage performance evaluation, we consider  transmission without power control as an illustrative example. We generate the power variation term for the $i^{\mathrm{th}}$ user as
\begin{equation}\label{lsf}
\xi _i=\beta _i+\psi _i\ \text{dB},
\end{equation}
where $\beta_i=-120.9-\text{37.6}\log_{10}(r_i)\ 0\le r_i\le r_{\text{cell}}$ is the pathloss model used in the third Generation Partnership Project (3GPP) standardization \cite{3GPPCellular15} for cellular MTC systems, with $r_i$ being the distance in kilometers between user $i$ and the base station, and $r_{\text{cell}}=0.91$~km the cell radius. The shadowing fading factor $\psi_i$ in dB is a normally distributed random variable with zero mean and standard deviation $8$~dB. We assume uniformly randomly distributed user locations, so that the statistical variation of $\xi_i$ comes from both the shadowing factor $\psi_i$ and the distance-dependent pathloss factor $\beta_i$. The per-user rate $R$ is fixed for both WL and CL receivers, which corresponds to the target data rate of the mMTC user application. Recall that the WL receiver processes one-dimensional modulated signals while the CL receiver works with complex-valued modulation. Thus, to ensure the required rate, if a WL receiver, for instance, adopts a channel code rate of $r$ with BPSK modulation over a TTI equal to $T$, the corresponding CL receiver chosen for comparison will adopt quaternary PSK (QPSK) with a code rate $r/2$ over the same TTI ($T$).

Fig.~\ref{WlrOutProb}(a)  and Fig.~\ref{WlrOutProb}(b) show the analytical and simulated results\footnote{When calculating the asymptotic outage probability of WL receivers, we encounter the statistical expectations in the coding gains \eqref{Twl-zf}, \eqref{Twl-mmse}, \eqref{ZF-SICgain} and \eqref{Cwl_msic} for the WL-ZF, WL-MMSE, WL-ZF-SIC and WL-MMSE-SIC receivers, respectively. These expectations can be quickly computed via the Monte Carlo integration, which needs to generate i.i.d. samples of $\xi _n$, $\eta _{\text{WL},n}$ and $v_{n,1}$ for $1\le n \le N$. The i.i.d. samples of $\xi _n$ and $\eta _{\text{WL},n}$ can be easily generated using \eqref{lsf} and \eqref{SINRdiff2}, respectively. The i.i.d. samples of $\left\{ v_{n,1} \right\}$ can be conveniently drawn from the standard normal distribution (see Appendix~C).} for the outage probability \eqref{Pout} versus transmit SNR \eqref{e:snr} for the investigated WL receivers without power control and with perfect power control (PPC), respectively. The scenario of $N=4$ users with a data rate $R=2$~bits/sec/Hz and a base station with $M=2$ receive antennas is considered. The analytical results for WL-ZF, WL-MMSE and WL-ZF-SIC receivers converge to the simulated curves in performance range of interest, i.e., at the outage probability of about $10^{-1}$ to $10^{-2}$.  Note that for the WL-MMSE-SIC receiver, the asymptotic result is obtained from the lower bound of the coding gain (\ref{Cwl_msic}), and hence the simulated curve is below the asymptotic curve in the high SNR regime. We further observe that, as it has been shown by the analysis in Section~\ref{s:performancesic}, the SIC operation does not provide an extra diversity gain but increases the coding gain. Moreover, as pointed out in  Section~\ref{s:performancesiccomparative}, the improvement in coding gain is more significant for the WL-MMSE than for the WL-ZF detection. 
\begin{figure*}[t]
	\centering
	\subfigure[Without power control]{\includegraphics[width=0.45\columnwidth]{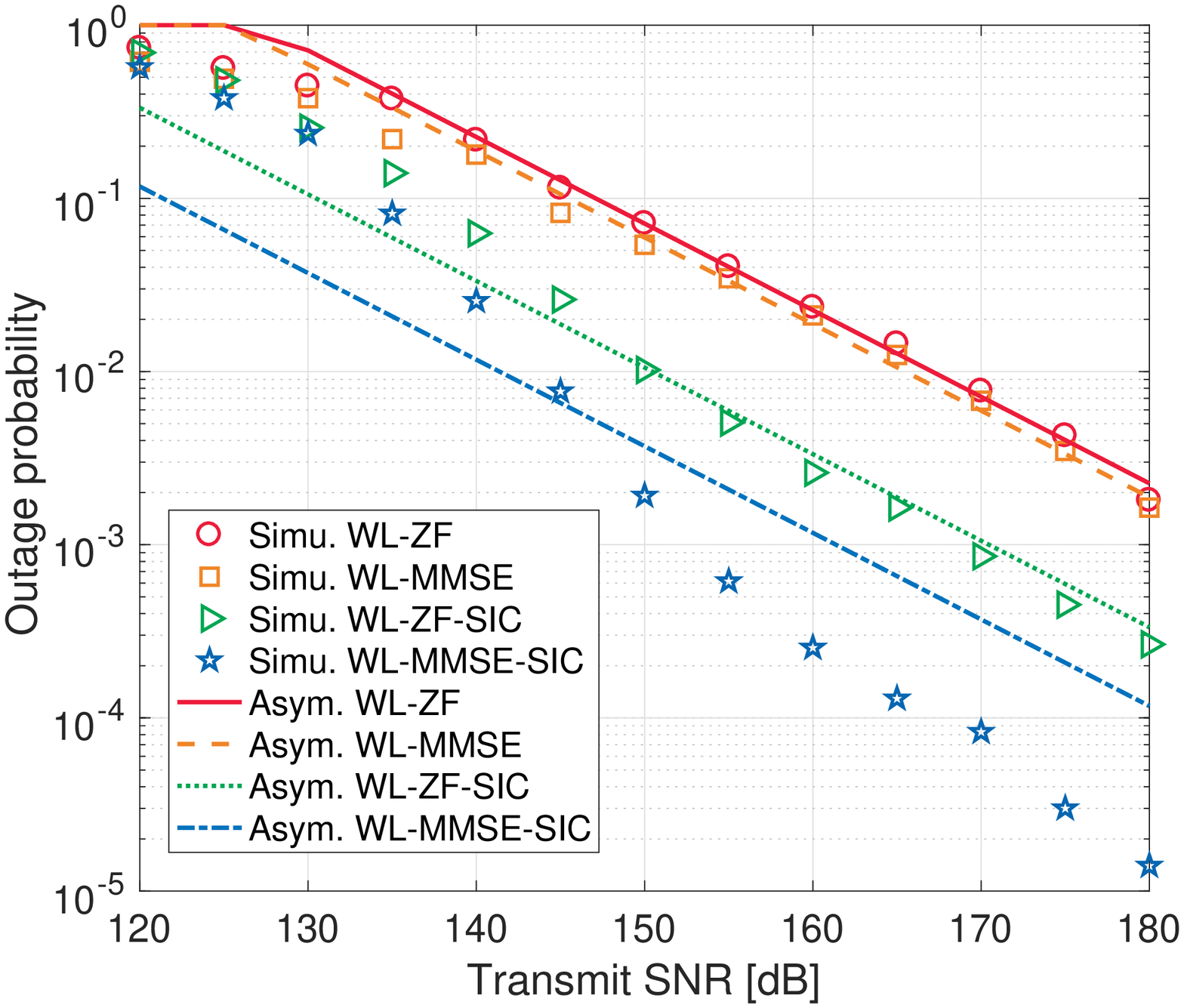}}
	\hspace{0in}
	\subfigure[With perfect power control]{\includegraphics[width=0.48\columnwidth]{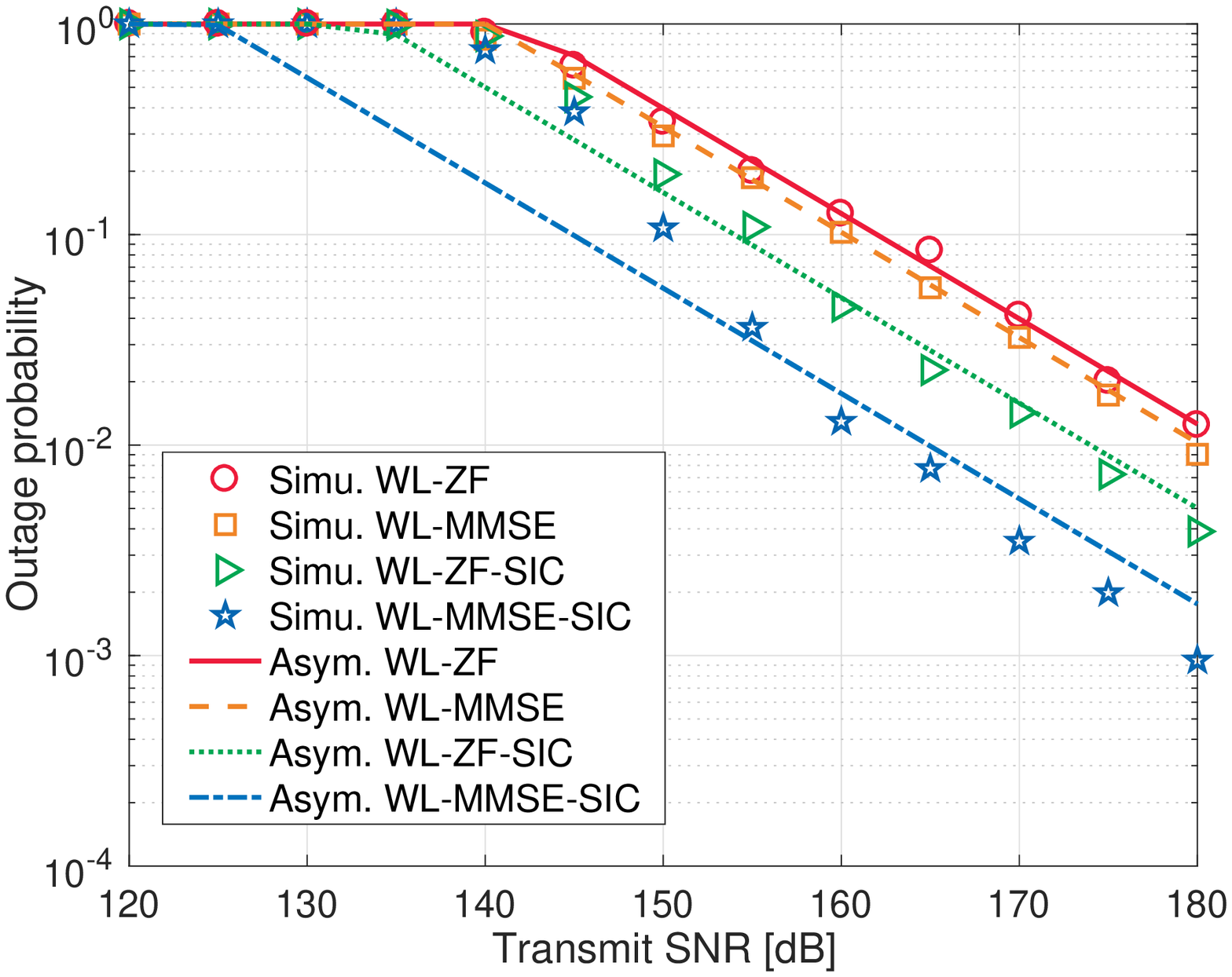}}
	\caption{Outage probability vs.\ transmit SNR for different WL receivers. $N=4$ users with data rate $R=2$~bits/sec/Hz and a base station with $M=2$ receive antennas.}\label{WlrOutProb}
	\vspace*{-5mm}
\end{figure*}


\begin{figure*}[t]
		\centering
		\subfigure[$N_{\text{WL}}=2, N_{\text{CL}}=2, R=2$~bits/sec/Hz]{\includegraphics[width=0.45\textwidth]{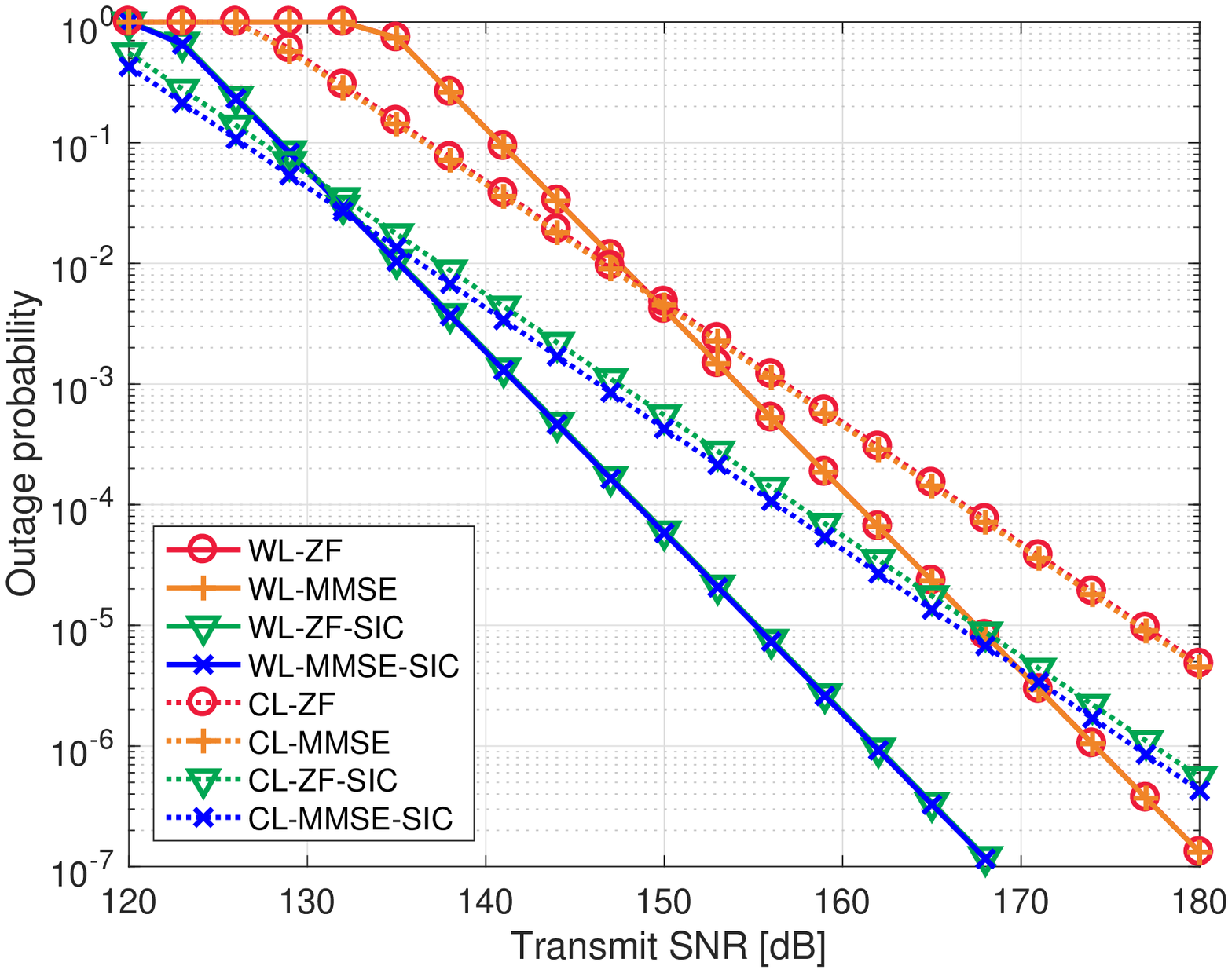}}
		\hspace{0in}
		\subfigure[$N_{\text{WL}}=3, N_{\text{CL}}=2, R=4$~bits/sec/Hz]{\includegraphics[width=0.45\textwidth]{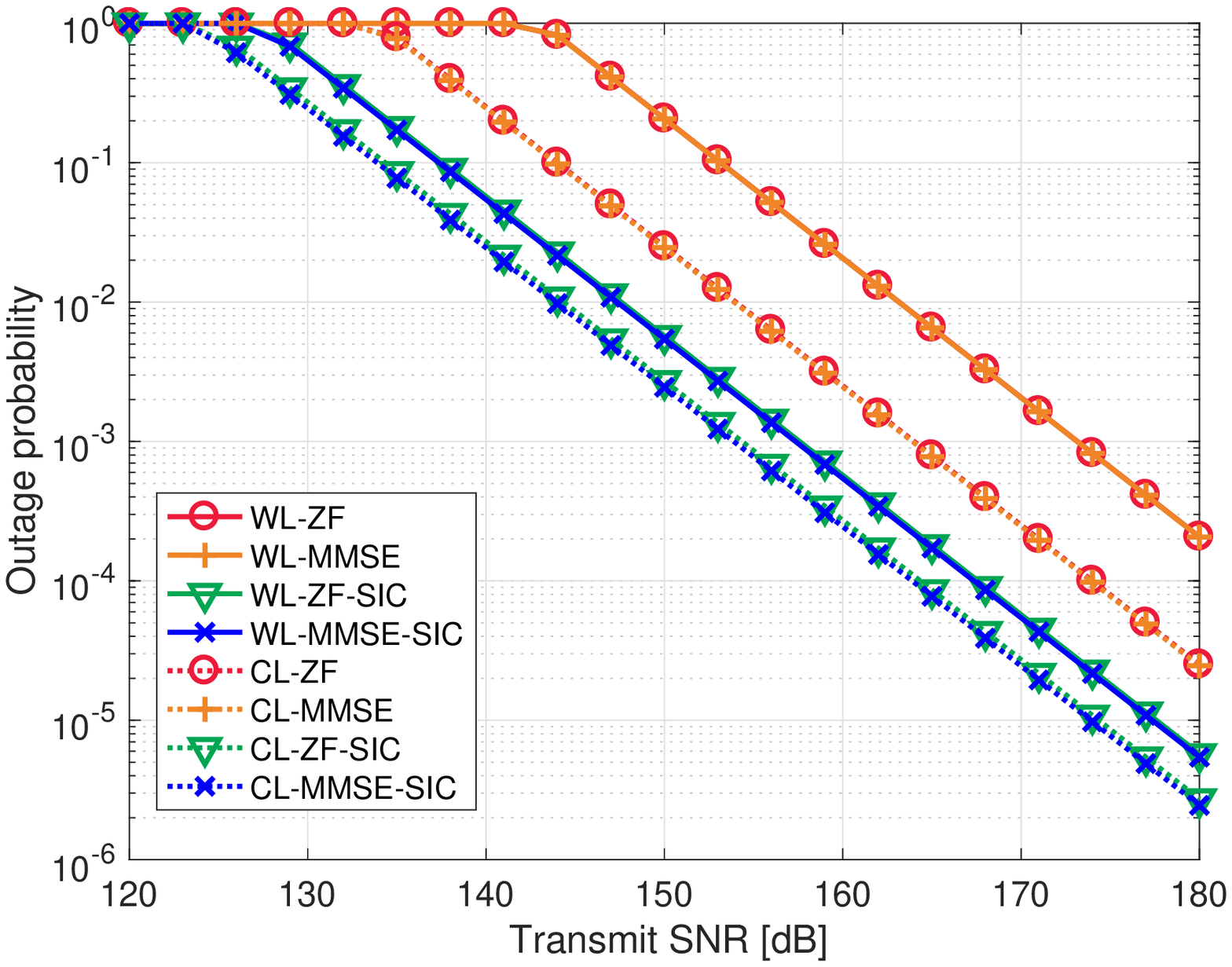}}
		\hspace{0in}
		\subfigure[$N_{\text{WL}}=3, N_{\text{CL}}=2, R=0.3$~bits/sec/Hz]{\includegraphics[width=0.45\textwidth]{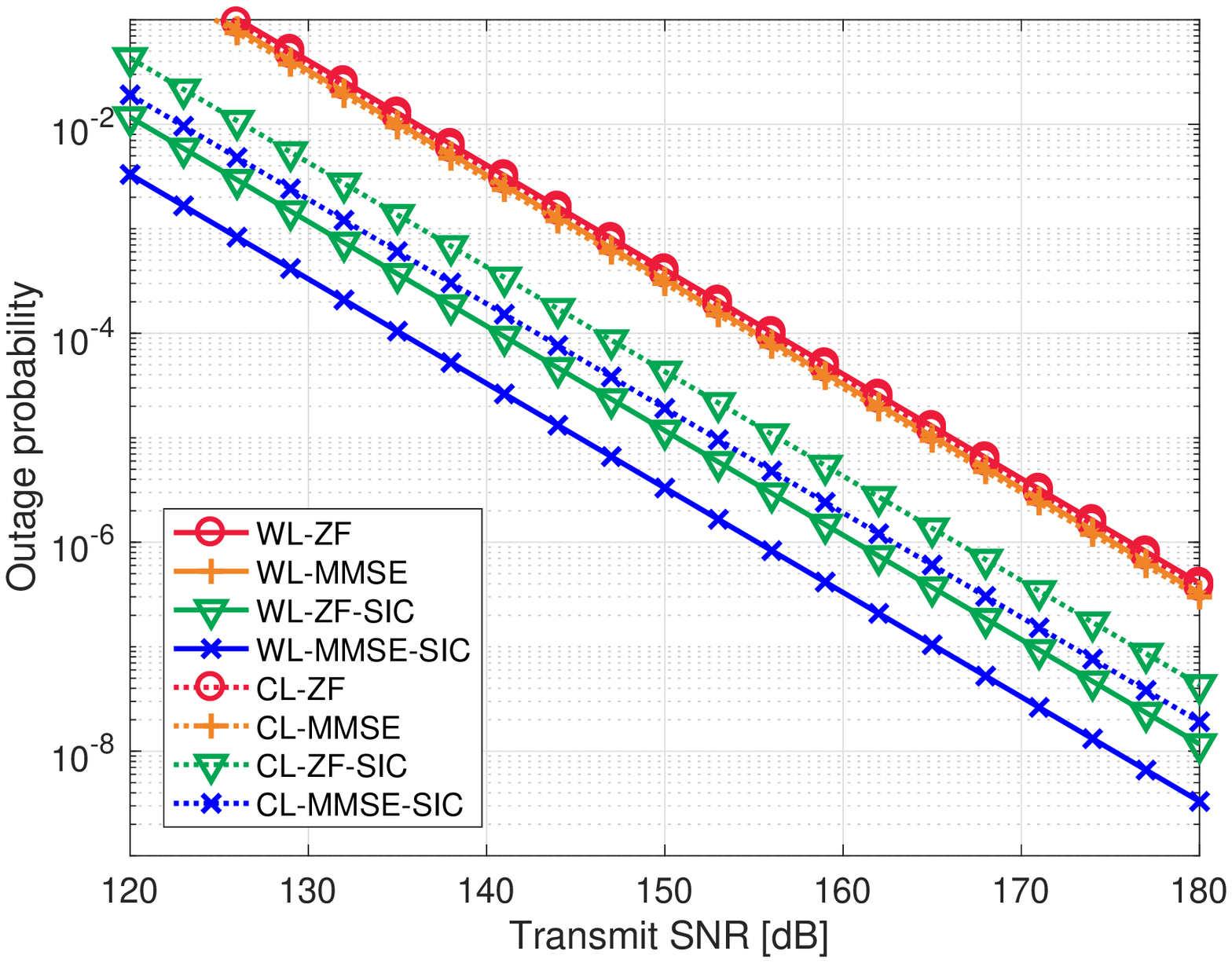}}
		\hspace{0in}
		\subfigure[$N_{\text{WL}}=4, N_{\text{CL}}=2, R=0.3$~bits/sec/Hz]{\includegraphics[width=0.45\textwidth]{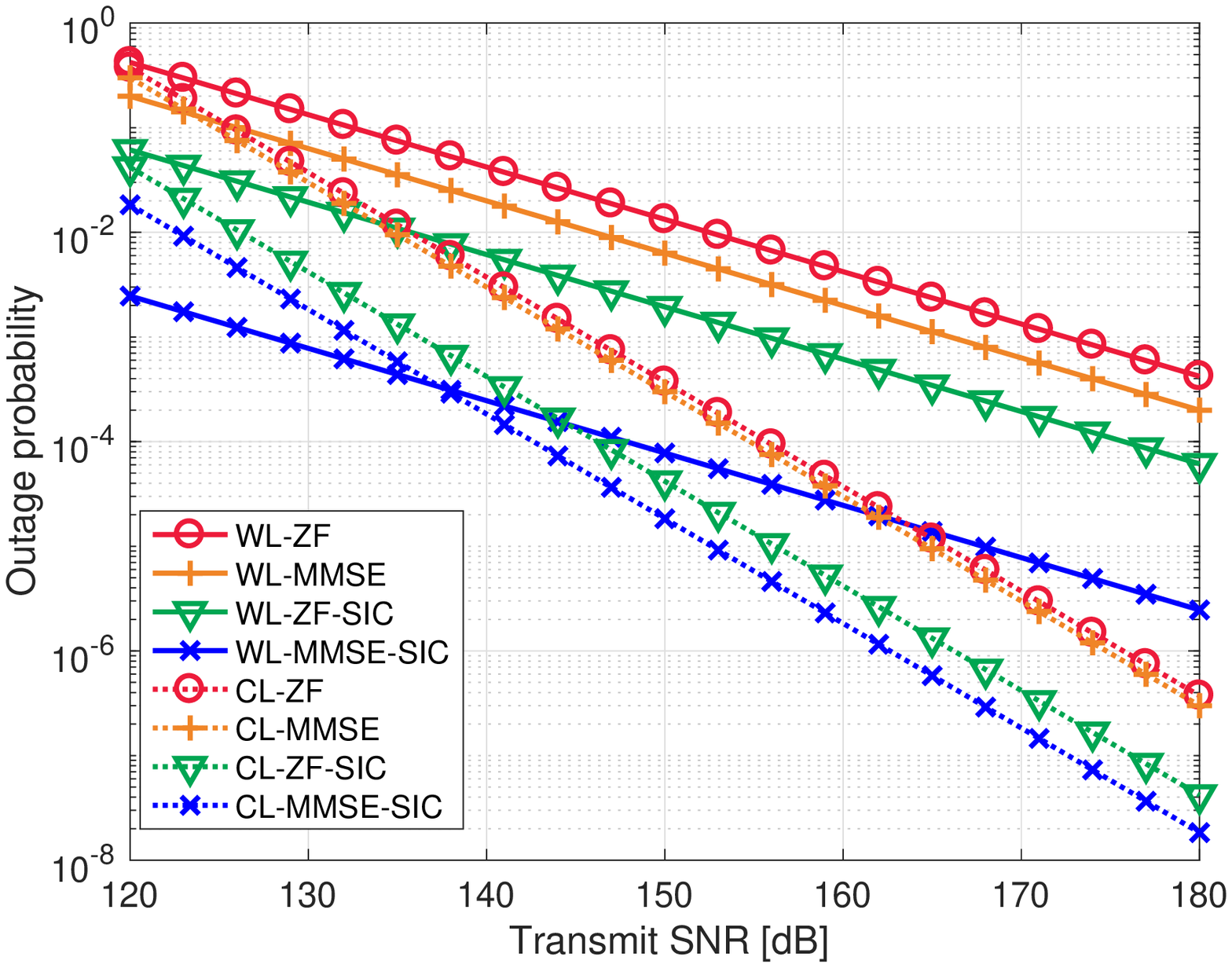}}
		\caption{Asymptotic outage probabilities vs.\ transmit SNR for different  WL and CL receivers. The number of receiver antennas at the base station is fixed to $M=2$. The number of users and the target rates vary. The asymptotic results for both WL-MMSE-SIC and CL-MMSE-SIC receivers are obtained from the corresponding asymptotic lower bounds for the coding gains.}\label{OutProbCmp}
\vspace*{-5mm}
\end{figure*}

Finally, we compare the performance of WL and CL receivers based on asymptotic outage probabilities. Fig.~\ref{OutProbCmp} shows the results for $M=2$ receive antennas and different numbers of users $N_{\text{WL}}$ for the WL and $N_{\text{CL}}$ for the CL case, and different target rates $R$. The asymptotic results for the WL-MMSE-SIC and CL-MMSE-SIC receivers are obtained from the corresponding lower bounds of the coding gain (see (\ref{Cwl_msic}) for the WL case).

For the case of equal number of users, $N_{\text{WL}}=N_{\text{CL}}=2$ in Fig.~\ref{OutProbCmp}(a),  we observe that the diversity gains of all WL receivers are greater than those of the CL receivers. Therefore, one can expect that the WL receivers will provide improved transmission reliability (i.e., lower outage probability) over their CL counterparts at high SNR. 

Next, we consider $N_{\text{WL}}=2N_{\text{CL}} - 1$ such that the diversity gains of WL and CL receivers become identical and compare the asymptotic outage probabilities for different target data rates. Figs.~\ref{OutProbCmp}(b) and~\ref{OutProbCmp}(c) show the results for $N_{\text{WL}}=2N_{\text{CL}} - 1=3$ and a relatively high and a low target rate, respectively. For the high-rate case with $R=4$~bits/sec/Hz, we observe that the WL receivers experience a decrease in coding gain, which is less pronounced for the WL-SIC receivers. The situation changes  when the target data rate is decreased to $R=0.3$~bits/sec/Hz. For WL-ZF and WL-MMSE receivers, the decrease in coding gain compared to CL-ZF and CL-MMSE becomes negligible (see Section~\ref{s:compana}). 
 Consequently, the WL-SIC receivers obtain even improved coding gains compared to those for the CL-SIC counterparts, which was predicted by the discussion in Section~\ref{s:performancesiccomparative}. 

In Fig.~\ref{OutProbCmp}(d), the number of WL users is further increased to $N_{\text{WL}} = 2N_{\text{CL}}=4$, while the target data rate is kept to be $R=0.3$~bits/sec/Hz. We observe that in this case the the WL receivers have a lower diversity gain than the CL receivers, which is clearly undesirable at high SNR. We remark that for all the comparisons above, the rates per user were chosen identical for WL and CL cases and thus the overall system rate is constant when $N_{\text{WL}} = N_{\text{CL}}$ and higher for the WL scenarios when $N_{\text{WL}} > N_{\text{CL}}$. In summary, the numerical results at high SNR match well with our asymptotic analysis for both high and low rate scenarios.

The outage probability analysis can be used to determine the packet drop-out probability and hence the system throughput \cite{GuiBalasubramanyaConnectivity19}, which is described in the following subsection.

\subsection{Packet Drop-out Probability and System Throughput}

\subsubsection{Simulation set-up}
The simulation set-up for characterizing the performance of the proposed WL receivers corresponds to an NB-IoT scenario operating over a system bandwidth of 180~kHz (48 tones with a subcarrier spacing of 3.75~kHz). We adopt a grant-free access mechanism, where users access the network in a contention based manner. We consider open loop power control where each user transmits at the maximum allowable transmit power (23~dBm) over a single subcarrier. The path-loss/shadowing follows the 3GPP path-loss model in [Annex D, \cite{3GPPCellular15}]. The TTI for a WL user is taken to be 32~ms for a transport block (packet) size is 32 bits. The data arrival process is Poisson with a rate $\lambda_a = 4.16 \times 10^{-6}$ packets/TTI/user. The target data rate for each user is $R = 0.3$~bits/s/Hz. The quality of a link is considered to be good if the packet drop probability is $\leq$1\%. As mentioned in Section~\ref{outage}, the WL and CL users adopt BPSK and QPSK, respectively, and they transmit for the same TTI. Hence, the code rate of the WL users is set to $r$, while that of the CL users is set to $r/2$ (in order to maintain the same target rate $R$).

\begin{figure*}[t]
\subfigure[Packet drop probability]{\includegraphics[width=0.5\textwidth]{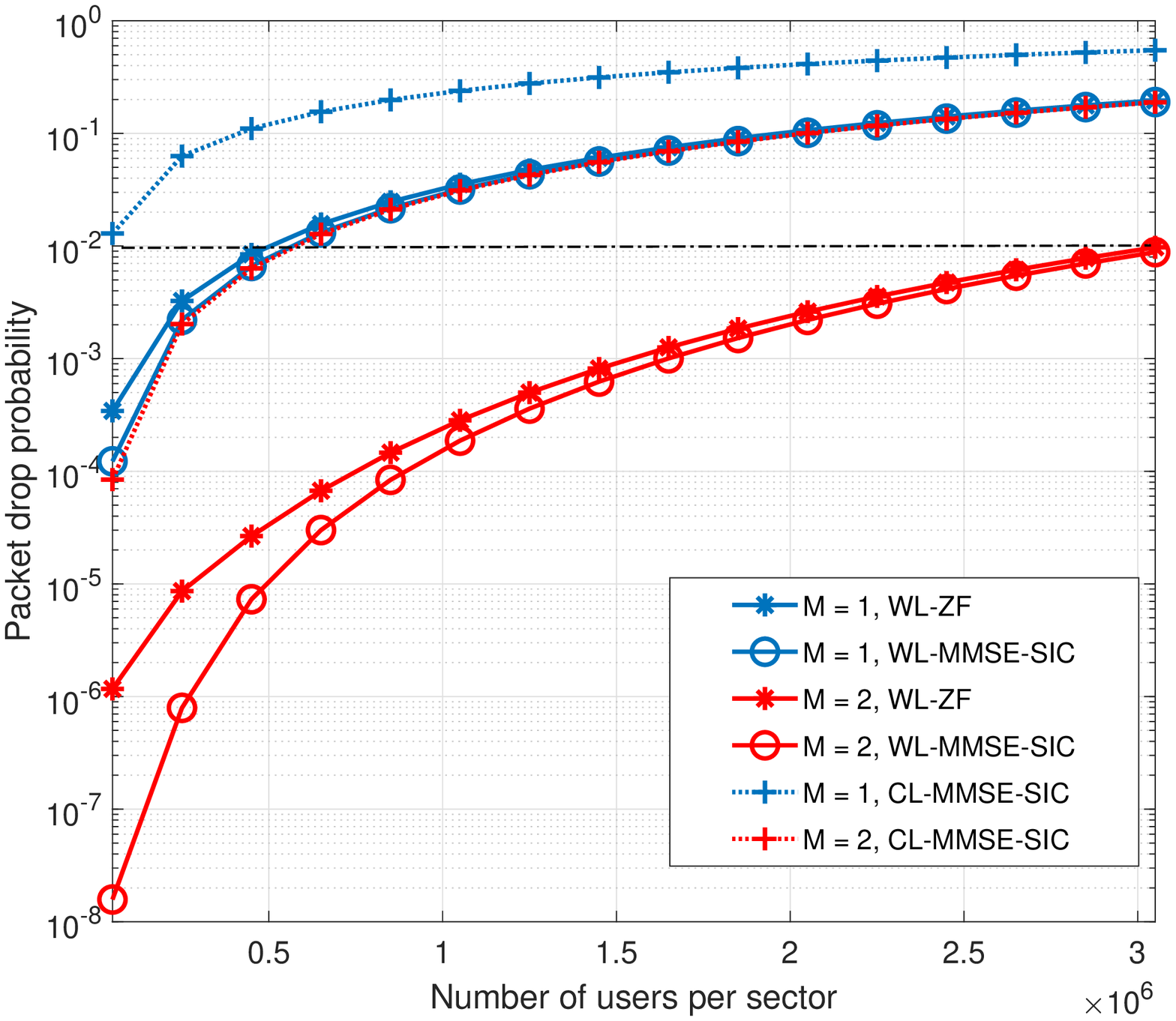}}
\subfigure[Throughput]{\includegraphics[width=0.5\textwidth]{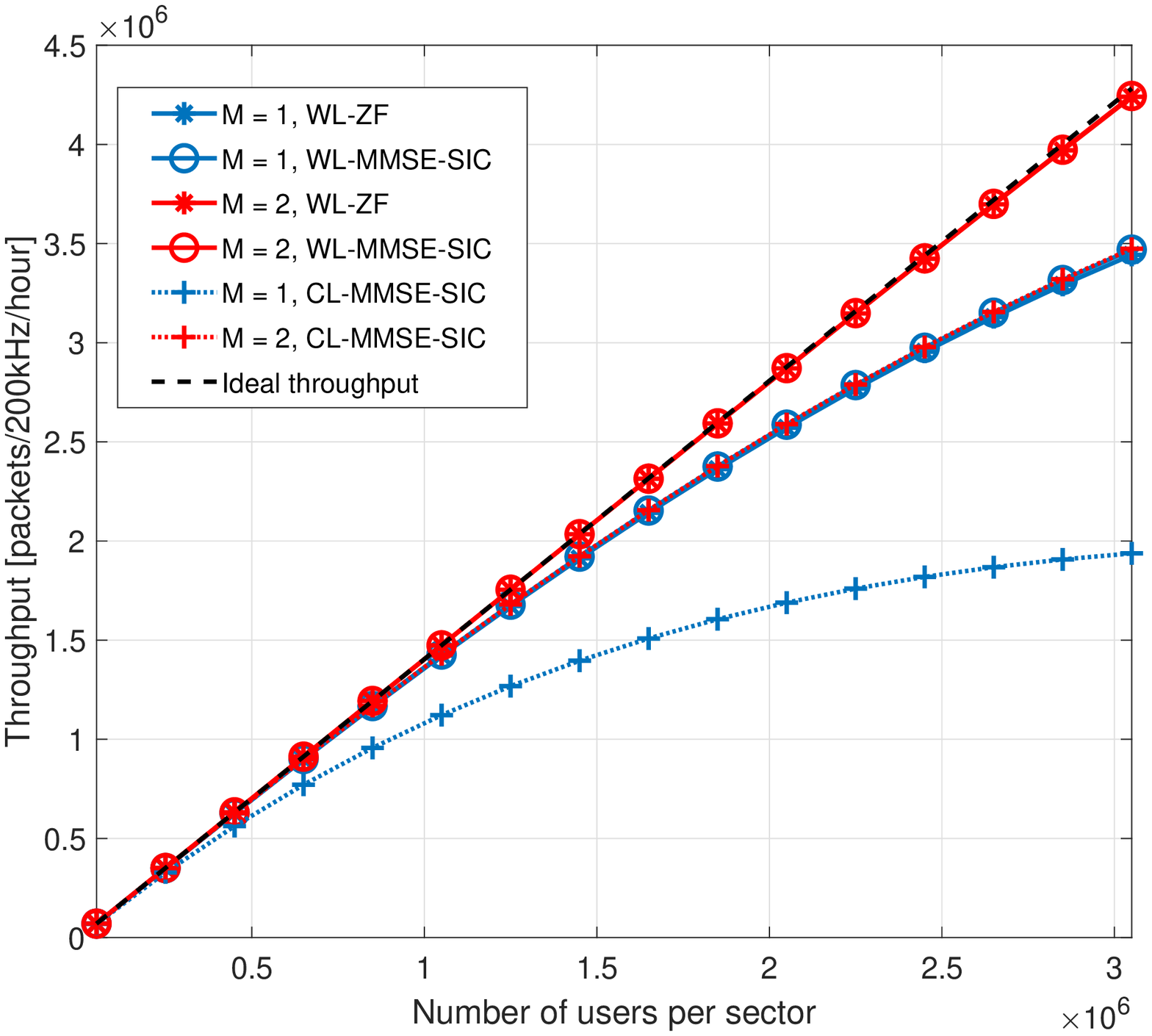}}
\caption{\label{fig:sameTTI}Performance comparison of CL and WL receivers with same TTI.}
\vspace{-0.5cm}
\end{figure*}

\subsubsection{Results}
Fig.~\ref{fig:sameTTI}(a) and Fig.~\ref{fig:sameTTI}(b) demonstrate the results for packet drop probability and system throughput, respectively, computed as per our previous work in \cite{GuiBalasubramanyaConnectivity19}. Specifically, we have considered the best performing receivers for CL and WL, CL-MMSE-SIC and WL-MMSE-SIC, respectively, and the simplest WL receiver (WL-ZF). It is evident from Fig.~~\ref{fig:sameTTI}(a) that the number of users supported by CL is around 50k users per cell, while that using WL is about 500k for $M = 1$. When $M =2$, CL and WL support around 650k  users and 3 million users per cell, respectively. The improvement obtained in WL processing is because of its ability to resolve collisions more effectively than CL. 

\begin{figure*}[t]
\subfigure[Packet drop probability]{\includegraphics[width=0.5\textwidth]{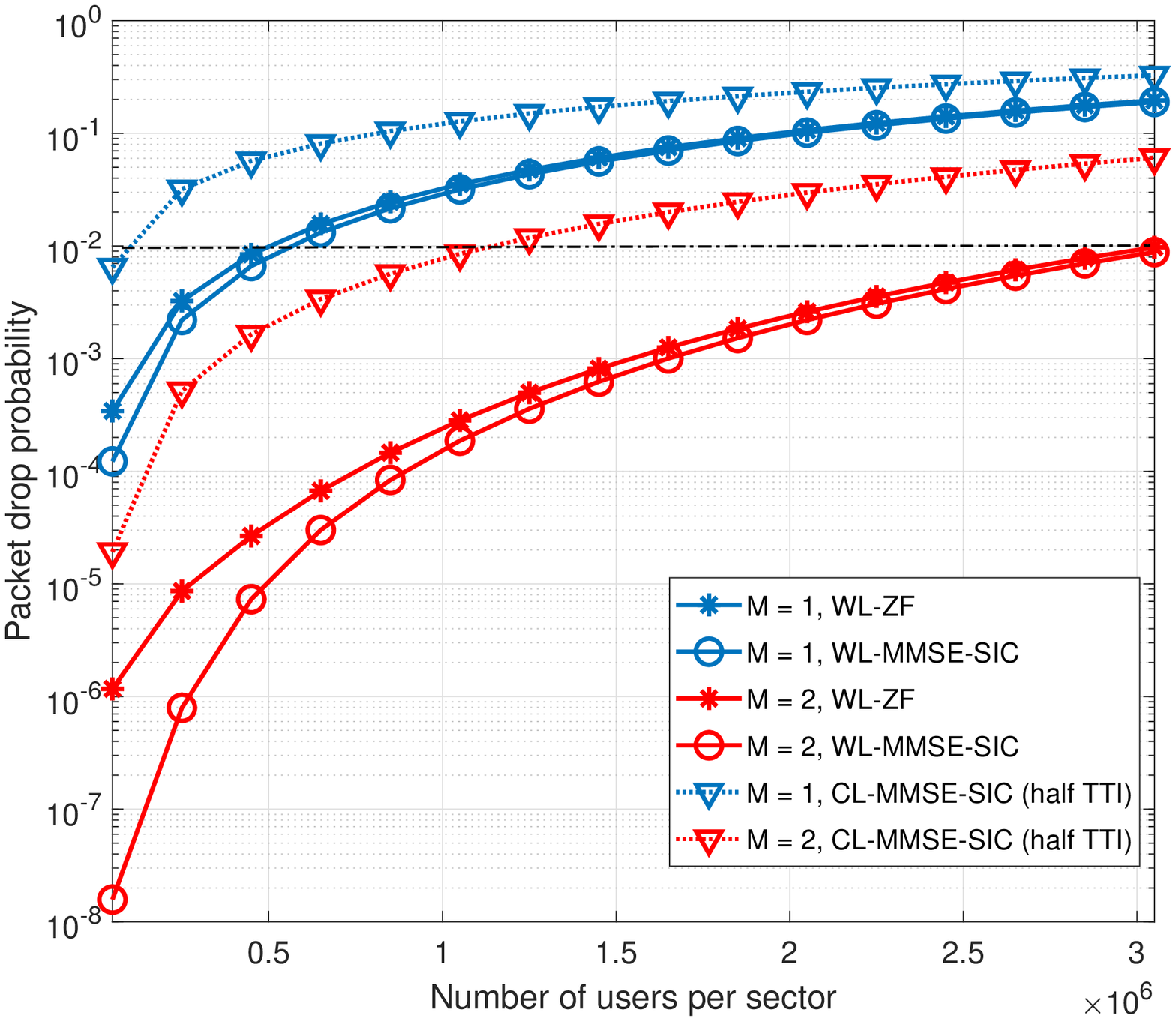}}
\subfigure[Throughput]{\includegraphics[width=0.5\textwidth]{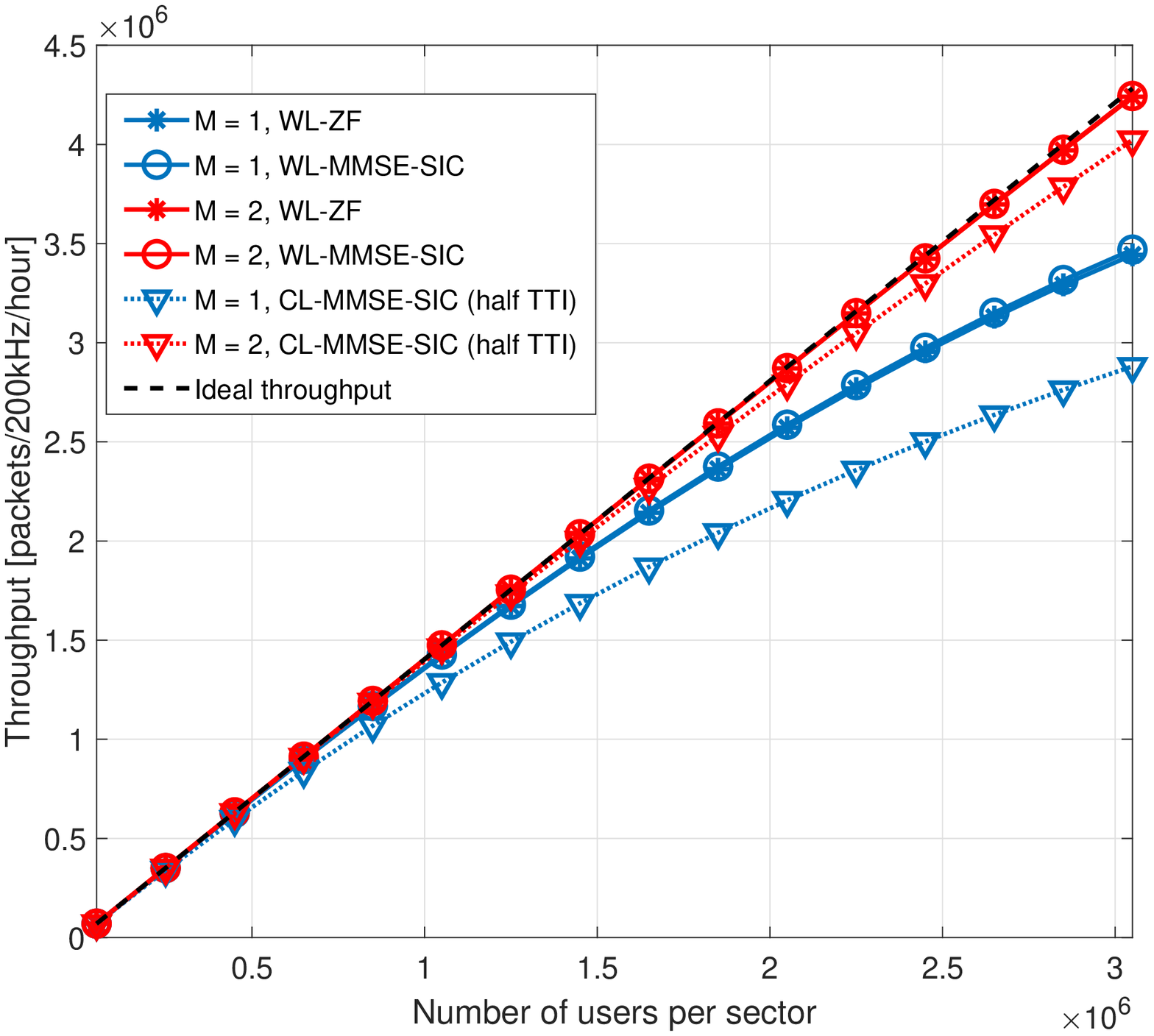}}
\caption{\label{fig:halfTTI}Performance comparison with CL and WL receivers with CL users occupying half TTI.}
\vspace{-0.5cm}
\end{figure*}

\subsubsection{\textit{Discussion}}
Hitherto we have considered that users transmit over the same duration. If this restriction is relaxed, the CL users could operate in a mode where the code rate is equal to that of the WL users, i.e., $r$, but the transmission rate is $2R$ over half the duration (TTI/2) as that of WL users (in order to maintain the same target rate $R$). Such a mode would be advantageous for CL users, since it helps to reduce collision rates owing to the shorter transmission interval. In this case, the outage probability analysis can be done by replacing $R$ with $2R$ in the equations corresponding to CL receivers. The packet-drop and throughput in this case can be analyzed by using the packet transmission rate as $\lambda_a/2$ per TTI.  The results corresponding to this mode of operation are indicated by the label CL (half TTI) in Fig.~\ref{fig:halfTTI}. As expected, CL (half TTI) results in lower packet drop probability when compared to CL using the same TTI. Consequently, for a packet drop probability of $1\%$, the CL (half TTI) supports around 80k users per cell and 1.2 million users per cell for $M = 1$ and $M = 2$, respectively.

We note that one could choose a larger constellation and reduce the transmission interval for WL users too, e.g. the combination of 4-PAM and TTI/2 for WL users. However, even without this alteration, WL processing supports higher number of users (500k and 3 million users per cell, respectively) than CL (half TTI), since it is better at resolving collisions. Furthermore, we would like to point that an optimization of the transmission duration would not be suitable for grant-free transmission.


Thus, our analysis and the results above indicate that the use of real-valued transmission with WL detection for uplink transmission would support a larger number of users, i.e., $N>M$ per time-frequency resource.

\section{Conclusions}
\label{s:conclusions}

In this paper, we have analyzed the outage performance of various WL receivers in an uplink multi-user MIMO system with real-valued modulations, in terms of diversity and coding gains. We prove that WL receivers with $N_\textrm{WL}$ real-valued user signals have a higher diversity gain than their CL counterparts with $N_\textrm{CL}$ complex-valued user signals when $N_\textrm{WL} < 2N_\textrm{CL} - 1$ and provide the same diversity gain when $N_\textrm{WL} = 2N_\textrm{CL} - 1$. We show that WL receivers with $M$ antennas can support a maximum number of $N_\textrm{WL} = 2M$ users with a positive diversity gain. Amongst the WL receivers serving $N$ users, we show that the WL-ZF and WL-MMSE receivers have the same diversity gain ($d_\mathrm{WL} = M -(N-1)/2$), but different coding gains. Moreover, the SIC operation with channel-dependent ordering brings no additional diversity gain to the WL-ZF and WL-MMSE receivers but increases their coding gains. The increase in coding gain due to SIC grows as the number of users increases. Furthermore, through analytical comparisons and numerical results, we show that the WL receivers are superior to the CL receivers. For grant-free mMTC (with a fixed TTI), WL processing results in improved transmission reliability (higher diversity gain) and user-multiplexing capability (supporting more users). Finally, we note that the WL receiver performance has been analyzed for Rayleigh fading channels. The analysis for more general cases, such as  Ricean fading channels, is expected to be more intricate and will thus be an interesting direction of work in the near future. We conclude that WL processing would be beneficial for numerous mMTC applications, such as, data transfer from smart metering systems, industrial asset tracking,  transport fleet management, smart buildings, etc. that demand low per-user data rates and are also delay tolerant while supporting high user densities.

	\appendices

	\section{Proof of Theorem~\ref{PolExpofCdf-k}}
        \label{p:theorem1}

We first present three Lemmas that we will use for proving Theorem~\ref{PolExpofCdf-k}.
	\begin{lemma}\label{DetIntegral}\cite{ChianiDistribution14}
		For an arbitrary $n \times n$ matrix $\ve{\Phi}(\ve{x})$ whose $(i,j)$th entry can be expressed as $\left[ \ve{\Phi}(\ve{x}) \right] _{i,j}=\Phi_i( x_j)$ for $1\le i,j\le n$, the following identity holds:
		\begin{equation}\label{Pff}
		\int_{\mathcal{D}_{\text{ord}}}{\left| \ve{\Phi }(\ve{x}) \right|\mathrm{d}\ve{x}}=\text{Pf}[\ve{A}], 
		\end{equation}
		where $\mathcal{D}_{\text{ord}}=\left\{ a\le x_1\le \ldots \le x_n\le b \right\}$ stands for the integral domain with ordered entries of $\ve{x}=[ x_1\ x_2\ \ldots \ x_n] ^T$, $\left| \ve{\Phi }(\ve{x}) \right|$ denotes the determinant of $\ve{\Phi }(\ve{x})$ and $\text{Pf}[\ve{A}]$ is the Pfaffian of the skew-symmetric matrix $\ve{A}$. The Pfaffian of skew-symmetric matrix $\ve{A}$ is a polynomial in the entries of $\ve{A}$ whose square is the determinant $|\ve{A}|$, i.e., $\left[ \text{Pf}(\ve{A}) \right] ^2=|\ve{A}|$.
			
		For $n$ even, matrix $\ve{A}$ is $n\times n$ with its $(i,j)$th entry
		\begin{equation}\label{aij}
		a_{i,j}=\int_a^b{\int_a^b{ \mathcal{R}(y-x) \Phi_i(x) \Phi_j(y) \mathrm{d}x\mathrm{d}y}},\quad 1\le i<j\le n.
		\end{equation}
		where $\mathcal{R}(y-x)$ is equal to $1$ for $y\ge x$ and to $-1$ otherwise. For $n$ odd, the skew-symmetric matrix $\ve{A}$ is $(n+1)\times (n+1)$ with $a_{i,j}$ given by (\ref{aij}) for $1\le i<j\le n$. The additional entries are $a _{i,n+1}=-a_{n+1,i}=\int_a^b{\Phi_i(x) \mathrm{d}x}$ for $1\le i \le n$ and $a_{n+1,n+1}=0$.
	\end{lemma}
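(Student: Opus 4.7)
This is de Bruijn's classical integration formula. The plan is to handle the even case $n = 2m$ by combining the Leibniz expansion of $|\ve{\Phi}|$ with the combinatorial (pair) definition of the Pfaffian, and to handle odd $n$ by an analogous bordering argument.

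For $n = 2m$, I would start from the combinatorial Pfaffian
\begin{equation*}
\text{Pf}(\ve{A}) = \frac{1}{2^m m!}\sum_{\sigma \in S_{2m}} \text{sgn}(\sigma)\prod_{k=1}^m A_{\sigma(2k-1),\sigma(2k)},
\end{equation*}
substitute the defining integral for $A_{i,j}$, rename the $2m$ dummy variables in pairs as $(\zeta_{2k-1},\zeta_{2k})$, and exchange summation with integration; the Leibniz identity $\sum_\sigma \text{sgn}(\sigma)\prod_j \Phi_{\sigma(j)}(\zeta_j) = |\ve{\Phi}(\ve{\zeta})|$ then collapses the permutation sum into a determinant, giving
\begin{equation*}
\text{Pf}(\ve{A}) = \frac{1}{2^m m!}\int_{[a,b]^{2m}} |\ve{\Phi}(\ve{\zeta})| \prod_{k=1}^m \mathcal{R}(\zeta_{2k}-\zeta_{2k-1})\,d\ve{\zeta}.
\end{equation*}
Since $|\ve{\Phi}|$ and each factor $\mathcal{R}(\zeta_{2k}-\zeta_{2k-1})$ both flip sign under the pair swap $\zeta_{2k-1}\leftrightarrow\zeta_{2k}$, the integrand is invariant; restricting to the pair-ordered region $\{\zeta_{2k-1}\le\zeta_{2k},\forall k\}$, where $\mathcal{R}\equiv 1$, and multiplying by $2^m$ yields
\begin{equation*}
\text{Pf}(\ve{A}) = \frac{1}{m!}\int_{\zeta_{2k-1}\le\zeta_{2k},\,\forall k} |\ve{\Phi}(\ve{\zeta})|\,d\ve{\zeta}.
\end{equation*}

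The next step converts the pair-ordered integral into the fully-ordered one over $\mathcal{D}_{\text{ord}}$. The pair-ordered region partitions into sub-regions indexed by the permutations $\tau \in S_{2m}$ with $\tau(2k-1)<\tau(2k)$ for all $k$; relabeling each sub-region to $\mathcal{D}_{\text{ord}}$ via $\zeta_i = x_{\tau(i)}$ extracts a factor $\text{sgn}(\tau)$ from the antisymmetry of $|\ve{\Phi}|$, producing
\begin{equation*}
\int_{\text{pair-ord}} |\ve{\Phi}|\,d\ve{\zeta} = \Bigl(\sum_{\tau\,\text{pair-pres.}}\!\text{sgn}(\tau)\Bigr)\int_{\mathcal{D}_{\text{ord}}} |\ve{\Phi}|\,d\ve{x}.
\end{equation*}
Invoking the identity $\sum_{\tau:\tau(2k-1)<\tau(2k)} \text{sgn}(\tau) = m!$ cancels the $1/m!$ prefactor and delivers \eqref{Pff} in the even case. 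For odd $n$, I would border $\ve{\Phi}$ by a formal constant row $\Phi_{n+1}\equiv 1$ and adapt the even-case argument to the $(n+1)$-dimensional system, interpreting the extra integration as a boundary evaluation rather than a convolution against $\mathcal{R}$; this produces precisely the Pfaffian of the bordered $(n+1)\times(n+1)$ skew-symmetric matrix with extra entries $a_{i,n+1}=\int_a^b\Phi_i(x)\,dx$ and $a_{n+1,n+1}=0$ as prescribed by the lemma.

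The main technical obstacle is the combinatorial identity $\sum_\tau \text{sgn}(\tau) = m!$ over pair-preserving $\tau$. I would establish it by uniquely factoring a pair-preserving $\tau$ as (i) an unordered pair partition $P$ of $\{1,\ldots,2m\}$ plus (ii) an ordering $\rho\in S_m$ of those pairs, noting that permuting two pairs is a product of two disjoint transpositions in $S_{2m}$ with combined sign $+1$, so $\text{sgn}(\tau)$ depends only on $P$. The sum then reduces to $m!\sum_P \text{sgn}(P)$, where $\sum_P \text{sgn}(P)$ equals the Pfaffian of the skew-symmetric matrix $\tilde{E}_{i,j}=\text{sgn}(j-i)$, and a short Pfaffian row-expansion establishes $\text{Pf}(\tilde{\ve{E}})=1$ by induction on $m$. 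Matching the bookkeeping in the odd-case bordering to the lemma's prescribed last row/column is the other delicate point, but reduces to a direct adaptation of the pair-reassembly argument combined with a Laplace expansion of the $(n+1)$-dimensional Pfaffian.
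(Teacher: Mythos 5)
The paper does not prove this lemma at all: it is imported verbatim from Chiani's work (the cited reference), and the appendices only \emph{apply} it. So there is no in-paper proof to compare against, and your proposal must be judged on its own. For the even case $n=2m$ your argument is the classical de Bruijn derivation and it is correct: the passage from the combinatorial Pfaffian to the $2m$-fold integral is legitimate (note that the integral formula for $a_{i,j}$ extends antisymmetrically to $i>j$ because $\mathcal{R}$ is odd, which is what licenses substituting it for every $A_{\sigma(2k-1),\sigma(2k)}$ regardless of orientation), the $2^m$ pair-swap symmetrization is sound, and the reduction of the pair-ordered integral to the fully ordered one via the identity $\sum_{\tau}\sgn(\tau)=m!$ over pair-preserving $\tau$ is verified correctly through the matching/ordering factorization and $\text{Pf}[\tilde{\ve{E}}]=1$. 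The one place that needs more care is the odd case: as literally stated, bordering with a constant row $\Phi_{n+1}\equiv 1$ and running the even-case machinery would produce the border entries $\int_a^b\int_a^b\mathcal{R}(y-x)\Phi_i(x)\,\mathrm{d}x\,\mathrm{d}y$, not the required $\int_a^b\Phi_i(x)\,\mathrm{d}x$. You flag this yourself (``boundary evaluation rather than a convolution''), and the standard fix is either to take $\Phi_{n+1}$ to be a point mass at $b$ (so that $\mathcal{R}(b-x)\equiv 1$ and the extra variable is forced to the top of the ordering, collapsing the $(n+1)\times(n+1)$ determinant to the $n\times n$ one) or to expand the bordered Pfaffian along its last row and invoke the even case on the minors; either route should be written out to make the odd case complete. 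With that caveat, the proposal is a correct, self-contained proof of a result the paper only cites.
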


\begin{lemma}\label{PffExpansion}\cite{OkadaPfaffian19}
	Let $\ve{A}\in \mathbb{R}^{2m\times 2m}$ be a skew-symmetric matrix with its $(i,j)$th entry $a_{i,j}$, $1\le i,j \le 2m$. The Pfaffian of $\ve{A}$ can be expressed as
	\begin{subequations}\label{Pf}
		\begin{align}\label{e:Pfa}
		\text{Pf}[ \ve{A}] &=\sum_{\sigma \in \mathcal{S}}{\sgn(\sigma) \prod_{i=1}^m{a_{\sigma _{2i-1},\sigma _{2i}}}}
		\\\label{e:Pfb}
		&=\sum_{k=1}^m{\sum_{\sigma \in \mathcal{S}_k}{\sgn(\sigma) a_{\sigma _{2k-1},2m}\prod_{i=1,i\ne k}^m{a_{\sigma _{2i-1},\sigma _{2i}}}}}
		\end{align}
	\end{subequations}
	where $\sigma =\sigma _1,\sigma _2,\cdots ,\sigma _{2m}$ is a permutation of the integers $1,2,...,2m$, ${\cal S}$ is the set of all permutations $\sigma$ satisfying $\sigma_1 <\sigma_3 <\cdots <\sigma_{2m-1}$ and $\sigma_{2i-1} <\sigma_{2i}$ for $1\le i \le m$, and ${\cal S}_{k}$ is the $k$th subset of ${\cal S}$ including all permutations satisfying $\sigma_{2k} =2m$. In (\ref{e:Pfa}) and (\ref{e:Pfb}), the sums are over all permutations in the set ${\cal S}$ and ${\cal S}_k$, respectively. $\sgn(\sigma)$ denotes the sign of permutation $\sigma$, which equals to either $+1$ or $-1$ and is determined by the number of transpositions to obtain the permutation.
\end{lemma}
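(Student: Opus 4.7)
The plan is to take equation (\ref{e:Pfa}) as the standard combinatorial definition of the Pfaffian of a $2m\times 2m$ skew-symmetric matrix $\ve{A}$. The set $\mathcal{S}$ is a canonical parameterization of the perfect matchings of $\{1,2,\ldots,2m\}$: the constraints $\sigma_{2i-1}<\sigma_{2i}$ identify each matched pair by its smaller element, and $\sigma_1<\sigma_3<\cdots<\sigma_{2m-1}$ orders the pairs, so every matching is represented exactly once by some $\sigma\in\mathcal{S}$. The defining identity $[\text{Pf}(\ve{A})]^2=|\ve{A}|$ is a classical algebraic fact that I would cite rather than re-derive.

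The substantive content is equation (\ref{e:Pfb}), a Laplace-type expansion of the Pfaffian along the last row/column. The key observation is that the element $2m$ cannot occupy an odd position in any $\sigma\in\mathcal{S}$: if $\sigma_{2i-1}=2m$, then the constraint $\sigma_{2i-1}<\sigma_{2i}$ would force $\sigma_{2i}>2m$, which is impossible. Hence $2m=\sigma_{2k}$ for a unique $k\in\{1,\ldots,m\}$, giving the disjoint decomposition $\mathcal{S}=\bigsqcup_{k=1}^{m}\mathcal{S}_k$ with $\mathcal{S}_k=\{\sigma\in\mathcal{S}:\sigma_{2k}=2m\}$.

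I would then substitute this decomposition into (\ref{e:Pfa}) and, for each $\sigma\in\mathcal{S}_k$, isolate the factor $a_{\sigma_{2k-1},\sigma_{2k}}=a_{\sigma_{2k-1},2m}$ corresponding to the pair containing $2m$. The remaining $(m-1)$ factors in the product run over the indices $i\neq k$, yielding exactly the form (\ref{e:Pfb}). The sign $\sgn(\sigma)$ carries over unchanged, since the same permutation $\sigma$ indexes both expressions.

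The main (and essentially only) obstacle is notational care: verifying that the partition of $\mathcal{S}$ by the position of $2m$ is both exhaustive and disjoint, and that isolating the factor $a_{\sigma_{2k-1},2m}$ introduces no additional sign. Both are routine combinatorial checks, so the proof is quite short; the genuinely nontrivial algebraic fact underlying the definition, namely $[\text{Pf}(\ve{A})]^2=|\ve{A}|$, is not what we are proving here. The bookkeeping could be equivalently phrased by expanding a formal Pfaffian $\sum_{j<2m}(-1)^{2m-1-j}a_{j,2m}\,\text{Pf}[\ve{A}_{\widehat{j,2m}}]$ along the last row, but the proposed matching-based argument keeps the exposition closer to the form stated in (\ref{e:Pfa}).
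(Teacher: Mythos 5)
Your proposal is correct. Note that the paper itself offers no proof of this lemma: it is quoted verbatim from the cited reference \cite{OkadaPfaffian19}, so there is no in-paper argument to compare against. Your derivation supplies the missing elementary justification, and it is the natural one. Equation (\ref{e:Pfa}) is the standard perfect-matching definition of the Pfaffian (with $\mathcal{S}$ the canonical parameterization of matchings, and $[\text{Pf}(\ve{A})]^2=|\ve{A}|$ rightly left to the literature), and (\ref{e:Pfb}) is obtained from it by pure regrouping: since $\sigma_{2i-1}<\sigma_{2i}$ forces the largest index $2m$ into an even slot, the position of $2m$ induces the disjoint, exhaustive partition $\mathcal{S}=\bigsqcup_{k=1}^{m}\mathcal{S}_k$, and factoring out $a_{\sigma_{2k-1},2m}$ from each summand changes nothing about the sign because the permutation itself is unchanged. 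The one point worth stating explicitly in a written-up version is the uniqueness of $k$ (immediate, since $2m$ occupies exactly one position), which guarantees no term is counted twice; with that observed, the proof is complete.
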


\begin{lemma}\label{l:pdfeig}\cite{ZhongMcKayDistribution11}
Let $0\le \lambda _1\le \lambda_2\le \cdots \le \lambda _n<\infty$ be the ordered eigenvalues of the real-valued central Wishart matrix $\ve{XX}^T\in \mathbb{R}^{n\times n}$, where each entry of $\ve{X}$ is i.i.d.\ drawn from the standard normal distribution. The joint PDF of $\ve{\lambda }=[\lambda _1,\lambda _2,\cdots ,\lambda _n ]^T$ is expressed as
\begin{equation}\label{jpdf}
f_{\ve{\lambda }}(\ve{\phi }) =K_{nm}^{-1}\left| \ve{U}(\ve{\phi}) \right|\left[ \prod_{i=1}^n{\e^{-\frac{1}{2}\phi _i}\phi _{i}^{\frac{1}{2}( m-n-1)}} \right],
\end{equation}
where $\ve{\phi }=\left[ \phi _1\ \phi _2\ldots \phi _n \right] ^T$ with $0\le \phi _1\le \phi_2\le \ldots \le \phi _n<\infty$, $\ve{U}(\ve{\phi}) \in \mathbb{R}^{n\times n}$ is a Vandermonde matrix with the $(i,j)$th entry $\left[ \ve{U}(\ve{\phi }) \right] _{i,j}=\phi _{j}^{i-1}$ and $\left| \ve{U}(\ve{\phi }) \right|$ is its determinant, given by
\begin{equation}\label{VanDet}
\left| \ve{U}(\ve{\phi }) \right|=\prod_{i=1}^{n-1}{\prod_{j=i+1}^n{(\phi_j-\phi_i)}}.
\end{equation}
\end{lemma}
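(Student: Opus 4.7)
The plan is to derive the joint PDF of ordered eigenvalues by starting from the Gaussian density of the entries of $\ve{X}$, passing through the Wishart density of $\ve{S}=\ve{X}\ve{X}^T$, and then changing variables to the spectral decomposition, finally integrating out the orthogonal degrees of freedom. The result is a classical one (attributed originally to James/Wishart in the real case), so the work will be in reproducing the derivation cleanly and tracking the normalizing constant to match the form $K_{nm}^{-1}\left|\ve{U}(\ve{\phi})\right|\prod_i e^{-\phi_i/2}\phi_i^{(m-n-1)/2}$ claimed in the lemma.

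First, I would record that the entries of $\ve{X}$ are i.i.d.\ $\mathcal{N}(0,1)$, so the matrix density is $(2\pi)^{-nm/2}\exp\!\bigl(-\tfrac12\mathrm{tr}(\ve{X}\ve{X}^T)\bigr)$. Integrating out the irrelevant right-orthogonal factor (the Stiefel manifold $V_n(\mathbb{R}^m)$ of $m\times n$ matrices with orthonormal columns) gives the real Wishart density of $\ve{S}=\ve{X}\ve{X}^T$ on the cone of $n\times n$ positive definite symmetric matrices,
\begin{equation*}
f_{\ve{S}}(\ve{s}) \;=\; \frac{1}{2^{nm/2}\Gamma_n(m/2)}\,|\ve{s}|^{(m-n-1)/2}\,e^{-\mathrm{tr}(\ve{s})/2},
\end{equation*}
where $\Gamma_n(m/2)=\pi^{n(n-1)/4}\prod_{i=1}^n\Gamma\!\bigl(\tfrac{m-i+1}{2}\bigr)$ is the real multivariate Gamma function. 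The Jacobian calculation from $\ve{X}$ to $\ve{S}$ is the standard one using the QR-decomposition of $\ve{X}$ and the volume of the Stiefel manifold $\mathrm{vol}(V_n(\mathbb{R}^m))=2^n\pi^{nm/2}/\Gamma_n(m/2)$.

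Next, I would apply the spectral decomposition $\ve{S}=\ve{Q}\,\mathrm{diag}(\phi_1,\ldots,\phi_n)\,\ve{Q}^T$ with $\ve{Q}\in O(n)$, restricting to $0\le\phi_1\le\cdots\le\phi_n$ to fix the ordering. The key Jacobian identity (see e.g.\ Muirhead) reads
\begin{equation*}
d\ve{S} \;=\; \Biggl(\prod_{1\le i<j\le n}(\phi_j-\phi_i)\Biggr)\,d\phi_1\cdots d\phi_n\,[\ve{Q}^T d\ve{Q}],
\end{equation*}
so $|\ve{s}|=\prod_i\phi_i$, $\mathrm{tr}(\ve{s})=\sum_i\phi_i$, and the product $\prod_{i<j}(\phi_j-\phi_i)$ is exactly the Vandermonde determinant $|\ve{U}(\ve{\phi})|$ from \eqref{VanDet}. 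Integrating $\ve{Q}$ over $O(n)$ contributes the ordered orthogonal-group volume $\mathrm{vol}(O(n))/\text{(discrete symmetries)}$, which is standard and expressible in terms of $\Gamma_n(n/2)=\pi^{n(n-1)/4}\prod_{i=1}^n\Gamma\!\bigl(\tfrac{n-i+1}{2}\bigr)$.

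Assembling the pieces, the overall constant collects as $2^{nm/2}\,\Gamma_n(m/2)\,\Gamma_n(n/2)/\pi^{n(n-1)/2}$, which after cancelling the $\pi^{n(n-1)/4}$ factors from both multivariate Gamma functions reduces to
\begin{equation*}
(2^m/\pi)^{n/2}\,\prod_{i=1}^n \Gamma\!\Bigl(\tfrac{m-i+1}{2}\Bigr)\Gamma\!\Bigl(\tfrac{n-i+1}{2}\Bigr) \;=\; K_{nm},
\end{equation*}
which is precisely the stated normalizer. The two obstacles I anticipate are (i) carrying the Jacobian through the spectral change of variables without double-counting the orthogonal-group symmetries (sign flips of columns of $\ve{Q}$ and permutations of eigenvalues have to be quotiented exactly once, which is why the ordering $\phi_1\le\cdots\le\phi_n$ matters), and (ii) arithmetically collapsing the mixture of Stiefel-volume, $O(n)$-volume, and Wishart-constant factors into the compact expression $K_{nm}$; the arithmetic simplification in step (ii) is the most error-prone, but it is mechanical once the $\pi$-exponents and Gamma products are tracked carefully.
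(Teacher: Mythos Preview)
The paper does not prove this lemma; it is stated with a citation to \cite{ZhongMcKayDistribution11} and used as a known input to the proofs of Theorems~\ref{PolExpofCdf-k} and~\ref{PolExpofCdf-1}. Your derivation via the real Wishart density followed by the spectral-decomposition Jacobian and integration over $O(n)$ is the classical route (essentially Muirhead, Theorem~3.2.18) and is structurally correct, so you are simply supplying a proof where the paper only cites one.

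One small arithmetic slip to fix when you write it up: your intermediate normalizer should read $2^{nm/2}\,\Gamma_n(m/2)\,\Gamma_n(n/2)/\pi^{n^2/2}$, with $\pi^{n^2/2}$ rather than $\pi^{n(n-1)/2}$. After cancelling the two $\pi^{n(n-1)/4}$ factors coming from the multivariate Gamma functions you are then left with exactly $\pi^{n/2}$ in the denominator, which gives the claimed $(2^m/\pi)^{n/2}\prod_{i=1}^n\Gamma\!\bigl(\tfrac{m-i+1}{2}\bigr)\Gamma\!\bigl(\tfrac{n-i+1}{2}\bigr)=K_{nm}$. With your stated exponent the $\pi^{-n/2}$ would not survive. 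This is precisely the ``mechanical but error-prone'' step~(ii) you flagged; the rest of the plan is sound.
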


From Lemma~\ref{l:pdfeig}, we can express the marginal PDF of the $k$th ordered eigenvalue as
\begin{equation}\label{MpdfExp}
f_{\lambda _k}(\phi _k) =\int\limits_{\mathcal{D}_{k}^{+}}{\int\limits_{\mathcal{D}_{k}^{-}}{f_{\ve{\lambda }}(\ve{\phi }) \mathrm{d}\ve{\phi}_{k}^{-}\mathrm{d}\ve{\phi }_{k}^{+}}}
=\int\limits_{\mathcal{D}_{k}^{+}}{\int\limits_{\mathcal{D}_{k}^{-}}{K_{nm}^{-1}\left| \ve{U}(\ve{\phi}) \right|\left[ \prod_{i=1}^n{\text{e}^{-\frac{1}{2}\phi _i}\phi_{i}^{\frac{1}{2}\left( m-n-1 \right)}} \right] \mathrm{d}\ve{\phi }_{k}^{-}\mathrm{d}\ve{\phi }_{k}^{+}}}
\end{equation}
where $\mathcal{D}_{k}^{-}=\left\{ 0\le \phi _1\le \ldots \le \phi _{k-1}\le \phi _k \right\}$ and $\mathcal{D}_{k}^{+}=\left\{ \phi_k\le \phi_{k+1}\le \ldots \le \phi_n<\infty \right\}$ are the integral domains with ordered eigenvalues $\ve{\phi }_{k}^{-}= [\phi _1\ \phi_2\ldots \phi _{k-1}]^T$ and $\ve{\phi }_{k}^{+}=[\phi _{k+1}\ \phi_{k+1}\ldots\phi _n ]^T$, respectively. Now we can proceed with the proof of Theorem~\ref{PolExpofCdf-k} as follows.
\begin{enumerate}
\item[]Step 1:  Express the joint PDF $f_{\ve{\lambda }}(\ve{\phi })$ as a product of three parts such that terms only depending on $\ve{\phi}_k^{-}$ and $\ve{\phi }_{k}^{+}$ are separated.
\item[]Step 2: Calculate the integrals in $\ve{\phi }_{k}^{-}$ and $\ve{\phi }_{k}^{+}$ using Lemmas~\ref{DetIntegral} and~\ref{PffExpansion}.
\item[]Step 3: Simplify the result using the polynomial approximation around zero. 
\end{enumerate}

\paragraph{Step 1} We can write \eqref{VanDet} as 
	\begin{equation}\label{DetSpit}
	\begin{split}
	\left| \ve{U}(\ve{\phi}) \right|=\left| \ve{U}(\ve{\phi }_{k}^{-}) \right|\prod_{i=1}^{k-1}{\prod_{j=k}^n{\left( \phi _j-\phi _i \right)}}\times \left| \ve{U}(\ve{\phi }_{k}^{+}) \right|\prod_{j=k+1}^n{(\phi _j-\phi _k)}
	\end{split}
	\end{equation}
and accordingly rewrite the joint PDF in (\ref{jpdf}) as 
	\begin{equation}\label{jpdf2}
	\begin{split}
	f_{\ve{\lambda }}(\ve{\phi}) =&\left|\ve{U}(\ve{\phi }_{k}^{-})\right| \prod_{i=1}^{k-1}\e^{-\frac{1}{2}\phi _i}\phi _{i}^{\frac{1}{2}\left( m-n-1 \right)}\prod_{j=k}^n{(\phi _j-\phi _i)}
\\
& \times K_{nm}^{-1}\e^{-\frac{1}{2}\phi _k}\phi _{k}^{\frac{1}{2}\left( m-n-1 \right)}
	\left| \ve{U}(\ve{\phi }_{k}^{+}) \right|\prod_{i=k+1}^n{\e^{-\frac{1}{2}\phi _i}\phi _{i}^{\frac{1}{2}\left( m-n-1 \right)}(\phi _i-\phi _k)},
	\end{split}
	\end{equation}
for which the product terms in the second line only depend on $\phi_k$ and $\ve{\phi }_{k}^{+}$.

	\paragraph{Step 2} We first consider the integration of $f_{\ve{\lambda }}(\ve{\phi})$ with respect to $\ve{\phi }_{k}^{-}$. In order to apply Lemma~\ref{DetIntegral}, we define
	\begin{align}
	\begin{split}
	g(\phi_i,\phi_k,\ve{\phi}_k^+) &\triangleq \e^{-\frac{1}{2}\phi _i}\phi _{i}^{\frac{1}{2}\left( m-n-1 \right)}\prod_{j=k}^n{(\phi _j-\phi _i)}
	\end{split}
	\\
	\begin{split}\label{Klmd}
	h(\phi_k,\ve{\phi }_{k}^{+}) &\triangleq K_{nm}^{-1}\e^{-\frac{1}{2}\phi _k}\phi _{k}^{\frac{1}{2}\left( m-n-1 \right)}\left| \ve{U}(\ve{\phi }_{k}^{+}) \right|\left[ \prod_{i=k+1}^n{\e^{-\frac{1}{2}\phi _i}\phi_{i}^{\frac{1}{2}\left( m-n-1 \right)}(\phi_i-\phi_k)} \right] . 
	\end{split}
	\end{align}
	and then express the joint PDF as
	\begin{equation}
	f_{\ve{\lambda }}( \ve{\phi}) =h(\phi_k,\ve{\phi}_{k}^{+}) \left| \ve{U}(\ve{\phi}_{k}^{-}) \right|\prod_{i=1}^{k-1}g(\phi _i,\phi_k,\ve{\phi}_k^+).
	\end{equation}
	By identifying $\Phi_i(x_j)$ in Lemma~\ref{DetIntegral} with $x_{j}^{i-1}g(x_j,\phi_k,\ve{\phi}_k^+)$, we can calculate the joint PDF of $\lambda _k$ and $\ve{\lambda }_{k}^{+}= [\lambda _{k+1},...,\lambda _n ]^T$ as
	\begin{equation}\label{jpdfk1}
	\begin{split}
	f_{\lambda _k,\ve{\lambda }_{k}^{+}}( \phi_k,\ve{\phi }_{k}^{+}) &=\int_{\mathcal{D}_{k}^{-}}{f_{\ve{\lambda }}(\ve{\phi }) \mathrm{d}\ve{\phi }_{k}^{-}}
	= h( \phi_k,\ve{\phi }_{k}^{+})\int_{\mathcal{D}_{k}^{-}}{\left| \ve{U}(\ve{\phi }_{k}^{-}) \right|\prod_{i=1}^{k-1}{g(\phi_i ,\phi_k,\ve{\phi}_k^+) \mathrm{d}\ve{\phi}_{k}^{-}}}
	\\
	&= h(\phi _k,\ve{\phi }_{k}^{+}) \text{Pf}[ \ve{J}(\phi_k,\ve{\phi}_k^+)],
	\end{split}
	\end{equation}
	where the entries of the skew-symmetric matrix $\ve{J}(\phi_k,\ve{\phi}_k^+)$ are  
	\begin{equation}\label{Jkij}
	\begin{split}
	\left[ \ve{J}(\phi_k,\ve{\phi}_k^+) \right] _{i,j}&=\int_0^{\phi _k}{\int_0^{\phi _k}{\mathcal{R}(y-x) x^{i-1}g( x,\phi_k,\ve{\phi}_k^+) y^{j-1}g(y, \phi_k,\ve{\phi}_k^+) \mathrm{d}x\mathrm{d}y}}
	\\
	&=\int_0^{\phi _k}{\int_0^{\phi _k}{e_{i,j}(x,y) \prod_{s=k}^n{(\phi _s-x) ( \phi_s-y)}\mathrm{d}x\mathrm{d}y}}
	\end{split}
	\end{equation}
	for $1\le i<j \le k-1$ and
	\begin{equation}\label{Jkij2}
	e_{i,j}(x,y ) =\mathcal{R}(y-x) \e^{-\frac{1}{2}x}\e^{-\frac{1}{2}y}x^{\frac{1}{2}( m-n-1) +(i-1)}y^{\frac{1}{2}( m-n-1) +( j-1)}\;.
	\end{equation}
        For odd $k$,  $\ve{J}(\phi_k,\ve{\phi}_k^+)$ has size $(k-1)\times (k-1)$, and for even $k$, 
        $\ve{J}(\phi_k,\ve{\phi}_k^+)$ has size $k\times k$, and the additional entries are given by
	\begin{equation}
	\left[ \ve{J}(\phi_k,\ve{\phi}_k^+)\right]_{i,k}=-\left[ \ve{J}(\phi_k,\ve{\phi}_k^+) \right]_{k,i}=\int_0^{\phi _k}{x^{i-1}g(x,\phi_k,\ve{\phi}_k^+) {\mathrm d}x}
	\end{equation}
	for $i<k$ and $\left[ \ve{J}(\phi_k,\ve{\phi}_k^+)\right]_{k,k}=0$. 
	
	 Inserting \eqref{jpdfk1} into \eqref{MpdfExp} yields the marginal PDF of $\lambda_k$ as
	\begin{equation}\label{flmdk}
	f_{\lambda _k}(\phi _k) =\int_{\mathcal{D}_{k}^{+}}{f_{\lambda _k,\ve{\lambda }_{k}^{+}}(\phi _k,\ve{\phi }_{k}^{+}) \mathrm{d}\ve{\phi}_{k}^{+}}
=\int_{\mathcal{D}_{k}^{+}}{h(\phi _k,\ve{\phi }_{k}^{+}) \text{Pf}[ \ve{J}(\phi_k,\ve{\phi}_k^+)] {\mathrm{d}}\ve{\phi }_{k}^{+}},
	\end{equation}
        and we turn to the integration with respect to $\ve{\phi }_{k}^{+}$. 
	For this, it is necessary to expand $\text{Pf}[ \ve{J}_{k}^{-}(\phi_k,\ve{\phi}_k^+)]$ via Lemma~\ref{PffExpansion}. 

 For odd $k$, we apply 
 (\ref{e:Pfa}) 
 to \eqref{flmdk} and obtain
	\begin{equation}\label{flmdkExp}
	f_{\lambda _k}(\phi_k) =\sum_{\sigma \in \mathcal{S}}{\sgn(\sigma) \int_{\mathcal{D}_{k}^{+}}{h(\phi _k,\ve{\phi}_{k}^{+}) \prod_{i=1}^{\left( k-1 \right) /2}{\left[ \ve{J}(\phi_k,\ve{\phi}_k^+) \right]_{\sigma _{2i-1},\sigma _{2i}}}\mathrm{d}\ve{\phi}_{k}^{+}}}.
	\end{equation}
	 Next, to apply Lemma~\ref{DetIntegral}, we consider the expression for $\ve{J}(\phi_k,\ve{\phi}_k^+)$ in \eqref{Jkij} and define
 $\ve{x}\triangleq [x_1,\cdots ,x_{\left( k-1 \right) /2}]^T$, $\ve{y}\triangleq  [y_1,\cdots ,y_{\left( k-1 \right) /2} ]^T$ 
 and $\mathcal{D}_{\ve{xy}}\triangleq\left\{ \ve{0}\le \ve{x}\le \phi_k\ve{1},\ve{0}\le \ve{y}\le \phi_k\ve{1}\right\}$, where $\ve{0}$ and $\ve{1}$ are the all-zero and all-one vector of length $(k-1)/2$, respectively, and we obtain 
	\begin{align}
\nonumber
&\prod_{i=1}^{\left( k-1 \right) /2}{\left[\boldsymbol{J}(\phi _k,\boldsymbol{\phi }_{k}^{+}) \right]_{\sigma _{2i-1},\sigma_{2i}}}=\int_{\mathcal{D}_{\boldsymbol{xy}}}\prod_{i=1}^{(k-1)/2}{e_{\sigma_{2i-1},\sigma _{2i}}(x_i,y_i)}\prod_{s=k}^n(\phi_s-x_i) (\phi _s-y_i)\mathrm{d}\boldsymbol{x}\mathrm{d}\boldsymbol{y}
	\\
	&\qquad=\int\limits_{\mathcal{D}_{\boldsymbol{xy}}}{\prod_{i=1}^{( k-1) /2}{e_{\sigma_{2i-1},\sigma_{2i}}(x_i,y_i)(\phi_k-x_i)( \phi_k-y_i)}}
	 \prod_{i=1}^{(k-1) /2}{\prod_{s=k+1}^n{( \phi_s-x_i) ( \phi_s-y_i )}} \mathrm{d}\boldsymbol{x}\mathrm{d}\boldsymbol{y}.
	\label{multifoldInt}
\end{align}
Substituting \eqref{multifoldInt} into \eqref{flmdkExp} yields
	\begin{equation}\label{multifoldInt2}
	f_{\lambda _k}(\phi_k) =\sum_{\sigma \in \mathcal{S}}{\sgn(\sigma) \int_{\mathcal{D}_{\ve{xy}}}{\prod_{i=1}^{( k-1) /2}{e_{\sigma _{2i-1},\sigma _{2i}}( x_i,y_i)}( \phi_k-x_i) (\phi_k-y_i) p(\phi _k,x_i,y_i) \mathrm{d}\ve{x}\mathrm{d}\ve{y}}},
	\end{equation}
	where
	\begin{equation}\label{e:pfunc}
	\begin{split}
	p\left( \phi _k,x_i,y_i \right) &=\int_{\mathcal{D}_{k}^{+}}{h\left( \phi _k,\boldsymbol{\phi }_{k}^{+} \right) \prod_{i=1}^{\left( k-1 \right) /2}{\prod_{s=k+1}^n{\left( \phi _s-x_i \right) \left( \phi _s-y_i \right) d\boldsymbol{\phi }_{k}^{+}}}}
		\\
		&=K_{nm}^{-1}\e^{-\frac{1}{2}\phi _k}\phi_{k}^{\frac{1}{2}( m-n-1)}\int_{\mathcal{D}_{k}^{+}}{\left| \boldsymbol{U}( \boldsymbol{\phi }_{k}^{+}) \right|\prod_{s=k+1}^n{q(\phi_k,\phi_s,x_i,y_i) \mathrm{d}\boldsymbol{\phi }_{k}^{+}}}
	\end{split}
	\end{equation}
	with
	\begin{equation}
	q( \phi_k,\phi_s,x_i,y_i) =\e^{-\frac{1}{2}\phi _s}\phi _{s}^{\frac{1}{2}( m-n-1)}( \phi_s-\phi_k) \prod_{i=1}^{( k-1) /2}{( \phi _s-x_i) ( \phi_s-y_i)}.
	\end{equation}
	We can now apply Lemma~\ref{DetIntegral} to \eqref{e:pfunc} and then obtain
	\begin{equation}\label{GammaExp2}
          p(\phi_k,x_i,y_i) =K_{nm}^{-1}\e^{-\frac{1}{2}\phi _k}\phi_{k}^{\frac{1}{2}( m-n-1 )}\text{Pf}[ \ve{\tilde J}( \phi_k,x_i,y_i)] ,
	\end{equation}
	where the concrete expression of the elements of the skew-symmetric matrix $\ve{\tilde J}(\phi _k,x_i,y_i)$ is not required for subsequent derivations and thus omitted. Substituting \eqref{GammaExp2} into \eqref{multifoldInt2} and reverting back to single-variate integrals, we finally get 
	\begin{equation}\label{PdfLmdk}
	f_{\lambda _k}(\phi_k) =K_{nm}^{-1}\e^{-\frac{1}{2}\phi _k}\phi_{k}^{\frac{1}{2}( m-n-1 )}\sum_{\sigma \in \mathcal{S}}{\sgn(\sigma) \prod_{i=1}^{( k-1)/2}{z_{\sigma _{2i-1},\sigma _{2i}}(\phi _k)}},
	\end{equation}
	where
	\begin{equation}\label{z_sigma_even}
	z_{\sigma _{2i-1},\sigma _{2i}}(\phi _k) =\int_0^{\phi _k}{\int_0^{\phi _k}{e_{\sigma _{2i-1},\sigma _{2i}}(x,y)}}(\phi _k-x)(\phi_k-y) \text{Pf}[ \ve{\tilde J}(\phi_k,x,y)] \mathrm{d}x\mathrm{d}y.
	\end{equation}
	
 For even $k$, we apply (\ref{e:Pfb}) to \eqref{flmdk} and following the analogous steps as above, obtain
 	\begin{equation}\label{PdfEvenCase}
 	f_{\lambda _k}(\phi _k) =K_{nm}^{-1}\text{e}^{-\frac{1}{2}\phi _k}\phi _{k}^{\frac{1}{2}( m-n-1)}\sum_{\ell =1}^{k/2}{\sum_{\sigma \in \mathcal{S}_{\ell}}{\sgn( \sigma) \tilde{z}_{\sigma _{2\ell-1},k}( \phi _k) \prod_{i=\text{1,}i\ne \ell}^{k/2}{z_{\sigma _{2i-1},\sigma _{2i}}( \phi _k)}}},
 	\end{equation}
 	where
 	\begin{equation}\label{zInt}
	\tilde{z}_{\sigma _{2\ell-1},k}( \phi _k) =\int_0^{\phi _k}{\e^{-\frac{1}{2}t}t^{\frac{1}{2}( m-n-1) +\sigma _{2\ell-1}-1}( \phi _k-t) \mathrm{d}t}.
 	\end{equation}

	\paragraph{Step 3} We consider the polynomial approximation of $f_{\lambda _k}(\phi_k)$ in \eqref{PdfLmdk} and \eqref{PdfEvenCase} around $\phi_k=0$.  
We write the first-order Taylor-series expansion of $\text{Pf}[\boldsymbol{\tilde{J}}( \phi _k,x,y)]$ as
	\begin{equation}\label{TayExp}
	\text{Pf}\left[ \boldsymbol{\tilde{J}}\left( \phi _k,x,y \right) \right] =\text{Pf}\left[ \boldsymbol{\tilde{J}}\left( 0,0, 0\right) \right] +D\left( \phi _k,x,y \right) 
	\end{equation}
	where $D\left( \phi _k,x,y \right) \triangleq D_{\phi}\phi _k+D_xx+D_yy+o\left( \phi _k \right) +o\left( x \right) +o\left( y \right)$, with $D_{\phi}$, $D_x$ and $D_y$ being the partial derivatives of $\text{Pf}[\boldsymbol{\tilde{J}}( \phi _k,x,y )]$ with respect to $\phi _k$, $x$ and $y$, all evaluated at $0$. 
	As $\phi_k\rightarrow 0$, it is not hard to show that
	\begin{equation}\label{e:helper}
	\int_0^{\phi _k}{\int_0^{\phi _k}{e_{i,j}( x,y) ( \phi _k-x)( \phi _k-y) {\mathrm{d}}x{\mathrm{d}}y}}=\alpha_{i,j}\phi_{k}^{( m-n+1) +i+j}+o\left( \phi _{k}^{( m-n+1) +i+j} \right),
	\end{equation}
	where $\alpha _{i,j}$ is a coefficient determined by $e_{i,j}(x,y)$. 
Consequently, we have
	\begin{equation}\label{PolExpPi}
	z_{\sigma _{2i-1},\sigma _{2i}}(\phi _k) =\text{Pf}[ \boldsymbol{\tilde{J}}( 0,0,0 )]\alpha _{\sigma_{2i-1},\sigma_{2i}}\phi _{k}^{( m-n+1) +\sigma_{2i-1}+\sigma_{2i}}+o\left( \phi_{k}^{( m-n+1) +\sigma_{2i-1}+\sigma_{2i}} \right). 
	\end{equation}

	Furthermore, the first-order expansion of $\tilde{z}_{\sigma _{2i-1},\sigma _k}\left( \phi _k \right)$ around $\phi _k=0$ is given by
	\begin{equation}\label{zExpEvenCase}
	\tilde{z}_{\sigma _{2\ell-1},k}\left( \phi _k \right) =\tilde{\alpha}_{\sigma _{2\ell-1},k}\phi _{k}^{\frac{1}{2}\left( m-n-1 \right) +\sigma _{2\ell-1}+1}+o\left( \phi _{k}^{\frac{1}{2}\left( m-n-1 \right) +\sigma _{2\ell-1}+1} \right) 
	\end{equation}
	where $\tilde{\alpha}_{\sigma _{2\ell-1},k}$ is determined by the integrand in \eqref{zInt}.

     Then, inserting \eqref{PolExpPi} into \eqref{PdfLmdk} yields the polynomial approximation of the marginal PDF of $\lambda_k$ for odd $k$, and using \eqref{PolExpPi} and \eqref{zExpEvenCase} in \eqref{PdfEvenCase} yields the polynomial approximation of the marginal PDF of $\lambda_k$ for even $k$.

	Next, noting that $\e^{-\frac{1}{2}\phi _k}= 1+o\left(1\right)$ and 
	\begin{subequations}
	\begin{align}
	\prod_{i=1}^{( k-1) /2}{\phi _{k}^{( m-n+1) +\sigma _{2i-1}+\sigma _{2i}}}=\phi_{k}^{\frac{1}{2}( k-1)( m-n+1) +\frac{1}{2}k( k-1)}
	\\
	\phi _{k}^{\frac{1}{2}( m-n-1) +\sigma _{2\ell -1}+1}\prod_{i=\text{1,}i\ne \ell}^{k/2}{\phi_{k}^{( m-n+1) +\sigma _{2i-1}+\sigma _{2i}}}=\phi _{k}^{\frac{1}{2}( k-1 ) ( m-n+1) +\frac{1}{2}k( k-1)}
	\end{align}
	\end{subequations}
	we obtain
	\begin{equation}
	f_{\lambda _k}( \phi _k) =K_{nm}^{-1}\text{Pf}[ \boldsymbol{J}_{\alpha}] \phi_{k}^{\tilde{d}_k}+o( \phi_{k}^{\tilde{d}_k}) ,
	\end{equation}
	where, for odd $k$, $\ve{J}_{\alpha}$ is a $(k-1) \times (k-1)$ skew-symmetric matrix with its $(i,j)$th entry given by $\left[ \ve{J}_{\alpha} \right] _{i,j}=\text{Pf}[ \boldsymbol{\tilde{J}}( 0,0,0 )]\alpha _{i,j}$, and for even $k$, $\ve{J}_{\alpha}$ is $k \times k$, with the additional entries $[ \boldsymbol{J}_{\alpha}] _{i,k}=-[ \boldsymbol{J}_{\alpha}]_{k,i}=\tilde{\alpha}_{i,k}$ for $1\le i \le k-1$ and $[ \ve{J}_{\alpha} ] _{k,k}=0$, and $\tilde{d}_k=\frac{1}{2}k\left( m-n+k-2 \right) +\left( k-1 \right)$. 

	Hence, the marginal CDF of $\lambda_k$ is polynomially expanded as
	\begin{equation}
	F_{\lambda _k}(\phi_k) =\beta _k\phi_{k}^{d_k}+o( \phi_{k}^{{d}_k}),
	\end{equation}
	where
	\begin{subequations}
	\begin{align}
	\beta_k&=d_{k}^{-1}K_{nm}^{-1}\text{Pf}[\boldsymbol{J}_{\alpha} ] 
	\\
	d_k&=\frac{1}{2}k( m-n+k).
	\end{align}
	\end{subequations}

	\section{Proof of Theorem~\ref{PolExpofCdf-1}}
\label{p:theorem2}

	Similar to the proof of Theorem~\ref{PolExpofCdf-k}, this proof makes use of Lemma~\ref{DetIntegral} and performs the three steps of  expressing the joint PDF $f_{\ve{\lambda }}(\ve{\phi })$ as a product of terms (Step~1), marginalizing the PDF (Step 2), and developing the polynomial expansion (Step 3). A final 4th step is added to obtain the coefficients in \eqref{Jij}.
	
	\paragraph{Step 1} Starting from \eqref{jpdf2} for $k=1$, and defining 
        \begin{equation}
	g( \phi_1,\phi_i) \triangleq \e^{-\frac{1}{2}\phi_i}\phi_{i}^{\frac{1}{2}\left( m-n-1 \right)}( \phi_i-\phi_1).
	\end{equation}
     the joint PDF $f_{\ve{\lambda }}( \ve{\phi } )$ is
	\begin{equation}
	f_{\ve{\lambda }}(\ve{\phi }) =K_{nm}^{-1}\e^{-\frac{1}{2}\phi_1}\phi _{1}^{\frac{1}{2}( m-n-1)}\left|\ve{U}(\ve{\phi }_{1}^{+}) \right|\prod_{i=2}^n{g(\phi_{1}, \phi_i )}.
	\end{equation}
	
	\paragraph{Step~2} The marginal PDF of $\lambda_1$ is calculated as
	\begin{equation}\label{fpdf1}
	f_{\lambda_1}( \phi_1) = K_{nm}^{-1}\e^{-\frac{1}{2}\phi _1}\phi _{1}^{\frac{1}{2}( m-n-1)}\int_{\mathcal{D}_{1}^{+}}{\ve{U}(\ve{\phi }_{1}^{+}) \prod_{i=2}^n{g(\phi_{1},\phi _i)}\mathrm{d}\ve{\phi }_{1}^{+}}.
	\end{equation}
        	By identifying $\Phi_i(x_j)$ in Lemma~\ref{DetIntegral} with $x_{j}^{i-1}g(\phi_1,x_j)$ in \eqref{fpdf1}, we can solve the marginal PDF as 
	\begin{equation}
	f_{\lambda_1}( \phi_1) =K_{nm}^{-1}\e^{-\frac{1}{2}\phi _1}\phi_{1}^{\frac{1}{2}( m-n-1)}\text{Pf}\left[ \ve{J}( \phi_1) \right],
	\end{equation}
	where $\ve{J}(\phi_1)$ is a skew-symmetric matrix of size $(n-1)\times (n-1)$ for odd $n$ and of size $n \times n$ for even $n$. For odd $n$, the elements of $\ve{J}( \phi_1 )$ are
	\begin{equation}
	\left[\ve{J}( \phi_1)\right]_{i,j} =\int_{\phi_1}^{\infty}{\int_{\phi_1}^{\infty}{\mathcal{R}(y-x) x^{i-1}g(\phi_1,x) y^{j-1}g(\phi_1,y) \mathrm{d}x\mathrm{d}y}},
	\end{equation}
	for $1\le i<j \le n-1$. For $n$ even, the additional entries of $\ve{J}(\phi _1)$ are
	\begin{equation}
	\left[\ve{J}( \phi_1)\right]_{i,n} =-\left[\ve{J}(\phi_1)\right]_{n,i}=\int_{\phi_1}^{\infty}{\e^{-\frac{1}{2}x}x^{\frac{1}{2}\left( m-n-1 \right) +\left( i-1 \right)}\left( x-\phi _1 \right) \mathrm{d}x}
	\end{equation}
	for $1 \le i \le n-1$ and $\left[\ve{J}(\phi_1)\right]_{n,n} =0$.
	
	\paragraph{Step~3} Noting that $\e^{-\frac{1}{2}\phi_1}=1+o(1) $ and
	$\text{Pf}\left[ \ve{J}(\phi_1) \right] =\text{Pf}\left[ \ve{J}( 0) \right] +o(1)$
	 for $\phi_1\to 0$, we have the polynomial expansion of $f_{\lambda_1}(\phi_1)$ around $\phi_1=0$ as
	\begin{equation}
	f_{\lambda_1}(\phi_1) =K_{nm}^{-1}\sqrt{|\ve{J}|}\phi_{1}^{\frac{1}{2}( m-n-1)}+o\left(\phi_{1}^{\frac{1}{2}( m-n-1)}\right) ,
	\end{equation}
	where we used  $\sqrt{|\ve{J}(0)|}=\text{Pf}\left[\ve{J}(0)\right]$ and introduced the short-hand notation $\ve{J}\triangleq\ve{J}(0)$ also used in Theorem~\ref{PolExpofCdf-1} and \eqref{Jij}. The CDF of $\lambda_1$ is readily obtained as
	\begin{equation}
	F_{\lambda_1}(\phi_1) =\beta_1\phi_{1}^{d_1}+o(\phi_{1}^{d_1}),
	\end{equation}
	with $d_1=\frac{1}{2}(m-n+1)$ as in \eqref{e:d1def} and $\beta_1=K_{nm}^{-1}d_{1}^{-1}\sqrt{|\ve{J}|} $as in \eqref{e:betadef}.

        \paragraph{Step 4} The final step is to calculate the entries of $\ve{J}$. For this, let  $b_i=\frac{1}{2}\left( m-n+1 \right) +i$ and 
$\Gamma_{\text{low}}( a,x) = \int_0^x\e^{-t}t^{a-1}\mathrm{d}t$
be the lower incomplete  Gamma function. Then, for $1\le i<j\le n-1$
	\begin{equation}
	\begin{split}
	[\ve{J}]_{i,j} &=\int_0^{\infty} \int_0^{\infty} \mathcal{R}( y-x) \e^{-\frac{1}{2}x}x^{b_i-1} \e^{-\frac{1}{2}y}y^{b_j-1}\mathrm{d}x\mathrm{d}y	\\
	&=\int_0^{\infty}\int_0^y \e^{-\frac{1}{2}x}x^{b_i-1} \e^{-\frac{1}{2}y}y^{b_j-1}\mathrm{d}x\mathrm{d}y-\int_0^{\infty}\int_y^{\infty}\e^{-\frac{1}{2}x}x^{b_i-1} \e^{-\frac{1}{2}y}y^{b_j-1}\mathrm{d}x\mathrm{d}y\\
&=\int_0^{\infty}2^{b_i}\Gamma_{\text{low}}(b_i,y/2) \e^{-\frac{1}{2}y}y^{b_j-1}\mathrm{d}y
-\int_0^{\infty} 2^{b_i}\left(\Gamma(b_i)-\Gamma_{\text{low}}(b_i,y/2)\right) \e^{-\frac{1}{2}y}y^{b_j-1}\mathrm{d}y.
	\end{split}
	\end{equation}
Defining
\begin{equation}
	I\left( a,b;x \right) \triangleq \frac{2}{\Gamma(a)\Gamma(b)}\int_0^{x}{t^{a-1}\e^{-t}\Gamma_{\text{low}}(b,t ) \mathrm{d}t}
	\end{equation}
we obtain the compact expression
	\begin{equation}
	[\ve{J}]_{i,j} = 2^{b_i}2^{b_j}\Gamma( b_i) \Gamma( b_j) \left[ 2I( b_j,b_i,\infty) -1 \right] \;.
	\end{equation}
From  \cite[Eqs. (15)-(17)]{ChianiDistribution14}, we have 
	\begin{equation}\label{RecursiveResult}
	I\left( b_j,b_i;\infty \right) =\frac{1}{2}-\sum_{k=1}^{i-j}{\frac{2^{-\left( b_j+b_i-k \right)}\Gamma \left( b_j+b_i-k \right)}{\Gamma \left( b_j \right) \Gamma \left( b_{i-k+1} \right)}},\ i\ge j.
	\end{equation}
	Hence, we can calculate $[\ve{J}]_{j,i}$ for $1\le i \le j \le n-1$ and then obtain $[\ve{J}]_{i,j} = -[\ve{J}]_{j,i}$, as in \eqref{Jij}. For $1\le i\le n-1$, we can readily obtain
	\begin{equation}
	[ \boldsymbol{J}] _{i,n}=-[ \boldsymbol{J}]_{n,i}=\int_0^{\infty}{\e^{-\frac{1}{2}x}x^{\frac{1}{2}( m-n-1 ) +i}\mathrm{d}x}=2^{b_i}\Gamma( b_i).
	\end{equation}

\section{Proof of (\ref{Moments})}\label{App-C}

For \eqref{e:momentcl}, recall that $\mu_n=|\nu_{n,1}|^2$, $\mu_{\min}=\min_n\{\mu_n\}$, $\nu_{n,1}$ is the first entry of $\ve{\nu}_n$ and $\ve{\nu}_n$ is a unit-length eigenvector of the complex central Wishart matrix $\boldsymbol{\bar{H}}^H\boldsymbol{\bar{H}}\in \mathbb{C}^{N_{\text{CL}}\times N_{\text{CL}}}$, where $\boldsymbol{\bar{H}}\in \mathbb{C}^{M\times N_{\text{CL}}}$ is the complex channel matrix. It has been shown in \cite{JiangVaranasiPerformance11} that $N_{\text{CL}}\mu _{\min}\sim \mu_n$ holds for all $N_{\text{CL}}$.  Hence, we arrive at $\mathbb{E}\left\{ \left( N_{\text{CL}}\mu _{\min} \right) ^d \right\} =\mathbb{E}\left\{ \mu _{n}^{d} \right\}$, or equivalently \eqref{e:momentcl}. To prove \eqref{e:momentwl}, we use the following lemma.
\begin{lemma}\label{L:CDF}
Let $\alpha _1$, $\alpha _2$, ..., $\alpha_N$ be i.i.d.\ standard Gaussian random variables and define $\zeta _i\triangleq \alpha _{i}^{2}$, $1\le i\le N$ and $\zeta _{\min}\triangleq \min_{1\le i\le N}\left\{ \zeta _i \right\}$. Then, given an arbitrarily small $\epsilon _0$, there is a sufficiently large $N$ such that 
\begin{equation}\label{L:momentIneq}
\frac{\mathbb{E}\{\zeta _{i}^{d}\}}{\mathbb{E}\{\left( N\zeta _{\min} \right)^d\}}\ge \frac{1}{1+\epsilon_0}.
\end{equation}
\end{lemma}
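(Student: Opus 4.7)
The plan is to establish the vanishing $\mathbb{E}\{(N\zeta_{\min})^d\}\to 0$ as $N\to\infty$. Since $\mathbb{E}\{\zeta_i^d\}$ is a strictly positive finite constant (the $d$-th moment of a $\chi_1^2$ variable), this makes the ratio $\mathbb{E}\{\zeta_i^d\}/\mathbb{E}\{(N\zeta_{\min})^d\}$ diverge to $+\infty$, so the asserted inequality \eqref{L:momentIneq} holds for every prescribed $\epsilon_0>0$ once $N$ is chosen sufficiently large.

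First I would invoke the tail-integration identity for positive random variables to write
\begin{equation*}
\mathbb{E}\{\zeta_{\min}^d\}=d\int_0^\infty s^{d-1}(1-F(s))^N\,ds,
\end{equation*}
where $F$ denotes the $\chi_1^2$ CDF. Since $\zeta_i=\alpha_i^2$ has density $f(s)=(2\pi s)^{-1/2}\e^{-s/2}$, integration yields the near-origin expansion $F(s)=\sqrt{2s/\pi}+O(s^{3/2})$. For a fixed $\delta\in(0,1)$, I would pick $x_0>0$ small enough that $F(s)\ge(1-\delta)\sqrt{2s/\pi}$ on $[0,x_0]$, and split the integral at $x_0$. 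On $[0,x_0]$, the bound $\log(1-u)\le -u$ yields $(1-F(s))^N\le\exp(-N(1-\delta)\sqrt{2s/\pi})$, and the substitution $u=N(1-\delta)\sqrt{2s/\pi}$ converts this piece into a Gamma-function integral of order $O(N^{-2d})$. On $[x_0,\infty)$, the factorization $(1-F(s))^N\le(1-F(x_0))^{N-1}(1-F(s))$ combined with the finiteness of $\int_0^\infty d\,s^{d-1}(1-F(s))\,ds=\mathbb{E}\{\zeta_1^d\}$ bounds the contribution by $(1-F(x_0))^{N-1}\mathbb{E}\{\zeta_1^d\}$, which decays geometrically in $N$.

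Combining the two contributions yields $\mathbb{E}\{\zeta_{\min}^d\}=O(N^{-2d})$, so $\mathbb{E}\{(N\zeta_{\min})^d\}=N^d\cdot O(N^{-2d})=O(N^{-d})\to 0$ as $N\to\infty$, which is considerably sharper than what the lemma demands. The main delicacy I anticipate lies in the low-tail change of variable: one must keep explicit track of the constants arising from the $\delta$-approximation of $F$ to confirm that the resulting bound is genuinely $O(N^{-2d})$ and does not conceal an $N$-dependent prefactor. The geometric-decay estimate on $[x_0,\infty)$ and the final combination are otherwise routine, so no deeper obstacle is expected.
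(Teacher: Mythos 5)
Your proof is correct, and it takes a genuinely different route from the paper's. The paper works with the tail-integral representations of both moments expressed through the Gaussian $Q$-function, truncates the integral for $\mathbb{E}\{(N\zeta_{\min})^d\}$ at a level $K_0$, and uses log-quadratic bounds on $Q$ to establish the pointwise domination $[2Q(\sqrt{x/N})]^N \le 2Q(\sqrt{x})$ for $N\ge N'(x)$; this delivers exactly the claimed inequality $\mathbb{E}\{(N\zeta_{\min})^d\}\le(1+\epsilon_0)\,\mathbb{E}\{\zeta_i^d\}$ and nothing more. You instead exploit the square-root behaviour $F(s)=\sqrt{2s/\pi}+O(s^{3/2})$ of the $\chi_1^2$ CDF at the origin, which makes $\zeta_{\min}$ of order $N^{-2}$ rather than the $N^{-1}$ scaling implicit in the paper's normalization (the latter would be correct for the exponential law arising in the complex case \eqref{e:momentcl}, whose CDF is linear at $0$). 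Your split at $x_0$ is sound: on $[0,x_0]$ the bound $1-u\le e^{-u}$ and the substitution $u=N(1-\delta)\sqrt{2s/\pi}$ give a contribution $d\,\pi^d\Gamma(2d)/\bigl(2^{d-1}(1-\delta)^{2d}\bigr)\cdot N^{-2d}$ with an $N$-independent constant, and on $[x_0,\infty)$ the factor $(1-F(x_0))^{N-1}$ decays geometrically, so $\mathbb{E}\{(N\zeta_{\min})^d\}=O(N^{-d})\to 0$ and the ratio in \eqref{L:momentIneq} actually diverges. This is strictly stronger than the lemma and more elementary (no $Q$-function machinery or truncation parameter matching); if propagated, it would sharpen \eqref{e:momentwl} to order $N_{\text{WL}}^{2d}$ and correspondingly strengthen the coding-gain comparison \eqref{Cratio}, both of which remain valid as stated since they are one-sided asymptotic lower bounds.
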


\begin{proof}
Since $\zeta_i$, $1\le i\le N$, are i.i.d.\ Chi-squared random variables with one DoF, we obtain the CDFs of $\zeta_i$ and $N\zeta_{\min}$ as
\begin{subequations}
\begin{align}
F_{\zeta _i}(x) &=\mathrm{erf}\left( \sqrt{\frac{x}{2}} \right) 
\\
F_{N\zeta _{\min}}(x) &=1-\left[ 1-\mathrm{erf}\left( \sqrt{\frac{x}{2N}} \right) \right]^N,
\end{align}
\end{subequations}
where $\mathrm{erf}(x)$ is the Gaussian error function. 

First, we express the desired moments using \cite{doi:10.1080/00031305.2017.1356374}
\begin{equation}
\label{e:expectation}
\mathbb{E}\{X^r\}=\int\limits_{0}^\infty  r x^{r-1}[1-F_X(x)] \mathrm{d}x\;\mbox{ for non-negative random variable }X,\\
\end{equation}
as
\begin{subequations}\label{e:moment}
\begin{align}
\mathbb{E}\{\zeta _{i}^{d}\}&=d\int\limits_0^{\infty}{x^{d-1}\left[ 1-\mathrm{erf}\left( \sqrt{x/2} \right) \right] \mathrm{d}x}=d\int\limits_0^{\infty}{x^{d-1}2Q\left( \sqrt{x} \right) \mathrm{d}x}
\\
\mathbb{E}\{\left( N\zeta _{\min} \right) ^d\}&=d\int\limits_0^{\infty}{x^{d-1}\left[ 1-\mathrm{erf}\left( \sqrt{x/\left( 2N \right)} \right) \right] ^N\mathrm{d}x}=d\int\limits_0^{\infty}{x^{d-1}\left[ 2Q\left( \sqrt{x/N} \right) \right] ^N\mathrm{d}x}
\end{align}
\end{subequations}
where $Q$ 
is the Gaussian Q-function. Noting 
$Q( x) \le \frac{1}{2}e^{-x^2/2}$,
we have
\begin{equation}
\mathbb{E}\{\left( N\zeta _{\min} \right) ^d\}\le d\int\limits_0^K{x^{d-1}\left[ 2Q\left( \sqrt{x/N} \right) \right] ^N\mathrm{d}x}+\epsilon \left( K \right)
\end{equation}
where
\begin{equation}
\epsilon \left( K \right) \triangleq d\int\limits_K^{\infty}{x^de^{-x/2}\mathrm{d}x}.
\end{equation}
Defining $\epsilon_0\triangleq\epsilon(K_0)$ for some $K_0$, it follows that
\begin{equation}
\frac{\mathbb{E}\{\zeta _{i}^{d}\}}{\mathbb{E}\{\left( N\zeta _{\min} \right) ^d\}}\ge \frac{\mathbb{E}\{\zeta _{i}^{d}\}}{d\int\limits_0^{K_0}{x^{d-1}\left[ 2Q\left( \sqrt{x/N} \right) \right] ^N\mathrm{d}x}+\epsilon_0}
 \ge \frac{d\int\limits_0^{K_0}{x^{d-1}2Q\left( \sqrt{x} \right) \mathrm{d}x}}{d\int\limits_0^{K_0}{x^{d-1}\left[ 2Q\left( \sqrt{x/N} \right) \right] ^N\mathrm{d}x}+\epsilon _0}. \label{MomentIneq}
\end{equation}
%

On the other hand, using the log-quadratic bounds for the Q-function \cite{mastin2013}
$Q\left( x \right) \le \frac{1}{2}\e^{-\frac{x^2}{\pi}-\sqrt{\frac{2}{\pi}}x}$ and $
Q\left( x \right) \ge \frac{1}{2}\e^{-\frac{x^2}{2}-\sqrt{\frac{2}{\pi}}x}$
%
we obtain the inequality
\begin{equation}\label{QfuncIneq}
\left[ 2Q\left( \sqrt{x/N} \right) \right] ^N
\le 2Q\left( \sqrt{x} \right) 
\end{equation}
for $N\ge N'(x)$ with 
\begin{equation}
N'\left( x \right) =\left[ 1+\left( \sqrt{\frac{\pi}{4}}-\sqrt{\frac{1}{2\pi}} \right) \sqrt{x} \right] ^2.
\end{equation}
Applying \eqref{QfuncIneq} in \eqref{MomentIneq}, we obtain, for $N\ge N'(K_0)$,
\begin{equation}
\frac{\mathbb{E}\{\zeta _{i}^{d}\}}{\mathbb{E}\{\left( N\zeta _{\min} \right) ^d\}}\ge \left(\frac{d\int\limits_0^{K_0}{x^{d-1}\left[ 2Q\left( \sqrt{x/N} \right) \right] ^N\mathrm{d}x}}{d\int\limits_0^{K_0}{x^{d-1}2Q\left( \sqrt{x} \right) \mathrm{d}x}}+\epsilon _0 \right)^{-1}\ge \frac{1}{1+\epsilon _0}.
\end{equation}

In summary, since $\epsilon (K)$ monotonically decreases to $0$ as $K\rightarrow \infty$, given an arbitrarily small $\epsilon_0$, we can find a $K_0=\epsilon^{-1}( \epsilon _0)$, such that for $N\ge N'( K_0)$,
\begin{equation}
\frac{\mathbb{E}\{\zeta _{i}^{d}\}}{\mathbb{E}\{\left( N\zeta _{\min} \right) ^d\}}\ge \frac{1}{1+\epsilon _0}.
\end{equation}
\end{proof}

Recall that $u_n=|v_{n,1}|^2$, $u_{\min}=\min_n\{u_n\}$, and $v_{n,1}$ is the first entry of $\ve{v}_n$, which is the $n^{\mathrm{th}}$ column of $\ve{V}^T$. Thus the vector $\ve{v}\triangleq[v_{1,1},v_{2,1},\ldots,v_{N,1}]^T$ is the first unit-length eigenvector of $2\ve{H}^T\ve{H}\in \mathbb{R}^{N_{\text{WL}}\times N_{\text{WL}}}$ (see \eqref{omega}). From \cite[Lemma 2.1]{jiang:2010}, the eigenvector $\ve{v}$ can be generated according to {$\ve{v}\sim \frac{\ve{\alpha }}{\lVert \ve{\alpha } \rVert}$, where $\ve{\alpha }\sim \mathcal{N}\left( \ve{0},\ve{I}_{N_{\text{WL}}} \right)$. Hence, for asymptotically large $N_{\text{WL}}$, it follows that $\ve{v}\sim \mathcal{N}\left( \ve{0} ,\ve{I}_{N_{\text{WL}}} /N_{\text{WL}}\right)$. 
Then, we can apply Lemma~\ref{L:CDF} with $\epsilon_0\to 0$ for $N_{\text{WL}}\to\infty$ to arrive at \eqref{e:momentwl}.

	{
	\bibliographystyle{IEEEtran}
	\bibliography{IEEEabrv,ComRef_LL2}
	}	

\end{document}